\documentclass[11pt]{article}

\usepackage[round]{natbib}
\usepackage{amsthm}
\usepackage{amssymb}
\usepackage{graphicx}
\usepackage{amsmath}
\usepackage{verbatim}
\usepackage{setspace}
\usepackage[margin=1.00in]{geometry}
\usepackage{changepage}
\usepackage{url}
\usepackage{booktabs}
\usepackage{float}

\usepackage{mathtools}
\usepackage{dsfont}

\usepackage[title]{appendix}

\usepackage{hyperref}
\hypersetup{colorlinks, linkcolor=blue, citecolor=blue}

\newtheorem{assumption}{Assumption}
\newtheorem{proposition}{Proposition}

\newtheorem{lemma}{Lemma}

\usepackage{enumitem}
\usepackage{cleveref}

\newlist{assumptionitems}{enumerate}{1}
\setlist[assumptionitems]{label=\arabic*., ref=\theassumption.\arabic*, align=left, leftmargin=2.2em, topsep=0pt, partopsep=0pt}
\Crefname{assumption}{Assumption}{Assumptions}
\crefname{assumptionitemsi}{assumption}{assumptions}
\Crefname{assumptionitemsi}{Assumption}{Assumptions}

\def\bs{\boldsymbol}
\def\t{^{\top}}

\begin{document}

\title{\textbf{Testing for Unobserved Heterogeneity in Censored Duration Models: EM Approach}}
\author{Hiroyuki Kasahara \\
Vancouver School of Economics\\
University of British Columbia \\
hkasahar@mail.ubc.ca
\and
Hirokazu Matsuyama\\
DENTSU SOKEN INC.\\
matsuyama.hirokazu@dentsusoken.com
\and
Katsumi Shimotsu\thanks{Corresponding author.}\\
Faculty of Economics \\
University of Tokyo\\
shimotsu@e.u-tokyo.ac.jp
\and
Shota Takeishi\\
Department of Statistics and Data Science\\
Washington University in St.~Louis\\
shota@wustl.edu}
\date{September 29, 2026}

\maketitle

\begin{abstract}

Ignoring unobserved heterogeneity in duration models biases parameter estimates and invalidates inference, but testing for it is non-regular: the null hypothesis lies on the boundary of the parameter space and some parameters are unidentified under the null. These features render standard asymptotic theory inapplicable. This paper develops an EM test for unobserved heterogeneity in censored Weibull duration models, building on the EM approach of \citet*{lcm09bm}. The test statistic has an asymptotic null distribution equal to the square of $\max\{0, N(0,1)\}$, hence critical values require neither simulation nor bootstrap, and the test accommodates covariate-dependent censoring of arbitrary form. Monte Carlo simulations compare the EM test with the likelihood ratio test (LRT) of \citet{chowhite10joe}, information matrix tests, and Lagrange multiplier tests. The EM test has empirical size close to the nominal level for sample sizes of 500 or more, where the LRT remains markedly conservative and the other tests over-reject. Its size-adjusted power is comparable to that of the LRT and higher than that of the other tests. Because size adjustment requires knowledge of the data-generating process and is unavailable in practice, the EM test attains higher power than the LRT in most designs as the tests would actually be applied. In an application to the Stanford Heart Transplant data, the EM test rejects homogeneity in every covariate specification, whereas the LRT's conclusion depends on a user-chosen set of admissible parameter values and on the specification.
\end{abstract}

Keywords: censored models; duration models; EM test; unobserved heterogeneity; Weibull distribution.

JEL Classification Codes: C12, C24, C41

\section{Introduction}

Weibull duration models have been widely used in economics to analyze transitions, such as unemployment spells \citep{Lancaster1979}, plant survival \citep{Chen02ijio}, firm survival \citep{Kalnins13aej}, workers' compensation claims \citep{Butler01restat}, school-to-work transitions \citep{Pastore21labour}, and waiting times between stock transactions \citep{Engle98em}. \citet{Kiefer88jel}, \citet{Lancaster90}, and \citet{vandenberg01hdbk} review the theory and applications of duration models.

As demonstrated by \citet{heckmansinger84em}, parameter estimates in duration models can be substantially affected by assumptions concerning unobserved heterogeneity, and failing to account for it may lead to bias. Consequently, several tests for unobserved heterogeneity have been developed. \citet{Chesher1984} and \citet{Lancaster1984, Lancaster1985} develop information matrix (IM) tests based on the fact that neglected unobserved heterogeneity violates the information matrix equality of the homogeneous model. \citet{Kiefer1985}, \citet{Sharma1987}, and \citet{Prieger2003} propose Lagrange multiplier (LM) tests that exploit the Laguerre polynomial representations of the exponential and Weibull distributions. A parallel literature in biostatistics develops score tests for homogeneity across groups in survival data \citep{commengesandersen95lida, bolfarinevalenca05bj}.

\citet{chowhite10joe} develop a likelihood ratio test (LRT) to detect unobserved heterogeneity in exponential and Weibull duration models using a finite mixture framework. Their procedure tests the null hypothesis of no unobserved heterogeneity against a two-component finite mixture alternative. The LRT statistic has a nonstandard asymptotic distribution due to the presence of unidentified parameters and boundary issues under the null hypothesis. Their simulations show that the test has power against a broad range of heterogeneity distributions and better size and power properties than the IM and LM tests.

Many duration datasets contain censored observations due to survey-based sampling designs. For example, in unemployment duration studies, some data are obtained from individuals who are currently unemployed, rather than from tracking individuals over the full course of their unemployment spells. In such cases, only part of the spell is observed. Moreover, the censoring time may depend on individual characteristics such as age and education.

In censored duration models, existing tests for unobserved heterogeneity face challenges related to both size control and implementation. First, the LRT of \citet{chowhite10joe}, the IM test, and the LM tests suffer from finite-sample size distortions: the IM and LM tests are typically severely oversized, while the LRT tends to be undersized. Second, the LRT is difficult to apply under covariate-dependent censoring because its asymptotic null distribution is known only for the cases of fixed censoring and random (independent) censoring. Third, implementing the LRT is computationally intensive, as its asymptotic null distribution is a functional of a Gaussian process and is approximated by the weighted bootstrap of \citet{hansen96em}. Finally, the LRT requires the user to specify a set of admissible values for the mixture components, and its asymptotic null distribution depends on that set. In finite samples, the size and the power of the LRT, and hence its conclusions, are sensitive to that choice.

This paper develops a new test for unobserved heterogeneity in censored Weibull duration models that is straightforward to implement and exhibits good finite-sample performance. The proposed test, referred to as the EM test, adopts the finite mixture framework of \citet{chowhite10joe} and employs the EM approach \citep{lcm09bm, chenli09as, kasaharashimotsu15jasa} to test the null hypothesis of no unobserved heterogeneity against a two-component finite mixture alternative.

The proposed test offers several attractive features. First, the asymptotic null distribution of the test statistic is the square of $\max\{0, N(0,1)\}$. Critical values therefore require neither simulation nor the bootstrap. Second, the test accommodates covariate-dependent censoring of arbitrary form without requiring special adjustments, provided that the duration and the censoring time are conditionally independent given the covariates.
Third, the test demonstrates good finite-sample performance: its empirical size is close to the nominal level when the sample size is 500 or more, under both covariate-independent and covariate-dependent censoring. Moreover, it exhibits better size control than the LRT, IM, and LM tests. Fourth, for sample sizes of 500 or more, the proposed test achieves size-adjusted power higher than that of the IM and LM tests and comparable to that of the LRT. Size adjustment is unavailable in practice, since it requires knowledge of the data-generating process, including the censoring mechanism. Without size adjustment, the EM test attains higher power than the LRT in most designs because the LRT is markedly conservative. Furthermore, an application to the Stanford Heart Transplant data illustrates the practical difference: the EM test rejects homogeneity at the 1\% level in all five covariate specifications, whereas the LRT fails to reject at the 5\% level under its narrowest admissible set in every specification and rejects in most of them under wider admissible sets.

Testing and inference in finite mixture models have attracted considerable interest in econometrics. \citet{chenponomarevatamer14joe} study inference on the mixing probability in a two-component mixture and show that the profiled likelihood ratio statistic has asymptotic limits that vary discontinuously near the region where the mixture is not identified; they use a parametric bootstrap to restore uniform coverage. \citet{gukoenkervolgushev18et} develop an LRT based on a nonparametric maximum likelihood estimator of the mixing distribution. They derive its asymptotic distribution, but find that the asymptotic critical values are unlikely to control size and instead advocate a parametric bootstrap, whose consistency they establish.

A related literature relaxes the parametric and censoring assumptions of the Weibull model. \citet{KhanTamer07joe} develop a rank-based estimator of the regression coefficients in nonparametric censored duration models that allow for general forms of censoring. \citet{KhanTamer09joe} propose an instrumental variables estimator that allows for covariate-dependent censoring and endogenous censoring. \citet{Szydlowski19jae} and \citet{Sakaguchi24jae} analyze the set identification of parameters in duration models with endogenous censoring. \citet{Han1990} and \citet{Meyer1990} combine flexible hazard functions with parametric assumptions about unobserved heterogeneity.

The remainder of this paper is organized as follows. Section \ref{sec:Weibull_model} introduces the Weibull duration model with censoring. Section \ref{sec:tests} formulates unobserved heterogeneity as a two-point mixture, develops the EM test and derives its asymptotic null distribution, and reviews the LRT of \citet{chowhite10joe}. Section \ref{sec:MC} reports Monte Carlo simulation results under covariate-independent and covariate-dependent censoring. Section \ref{sec:realdata} applies the tests to data from the Stanford Heart Transplant Program. Section \ref{sec:conclusion} concludes. Appendices \ref{app:assumptions}--\ref{app:aux} collect the assumptions, proofs, and auxiliary results, respectively.

We use the following notation throughout. All limits below are taken as $n \rightarrow \infty$, unless stated otherwise. We use boldfaced letters to denote vectors and matrices. Let $\coloneq $ denote ``equals by definition.'' Let $\mathbb{R}^+ \coloneq (0,\infty)$. Let $\mathds{1}\{A\}$ denote an indicator function that takes the value of one when $A$ is true and zero otherwise. For a $k\times 1$ vector $\bs{a}$, let $\|\bs{a}\|$ and $\|\bs{a} \|_\infty$ denote the Euclidean norm and the uniform norm, respectively. For a $k\times 1$ vector $\bs{a}$ and a function $f(\bs{a})$, let $\nabla_{\bs{a}}f(\bs{a})$ denote the $k\times 1$ vector of the derivative $(\partial/ \partial {\bs{a}}) f(\bs{a})$, and let $\nabla_{\bs{a}\bs{a}^{\top}}f(\bs{a})$ denote the $k\times k$ matrix of the derivative $(\partial^2/ \partial{\bs{a}}\partial{\bs{a}}^{\top}) f(\bs{a})$.

\section{Weibull duration model with censoring}\label{sec:Weibull_model}

This section introduces the Weibull duration model with censoring. Our notation largely follows that of \citet{chowhite10joe}, except that we write $\varphi$ for their regression function $g$. Let $\{(Y_t, C_t, \bs{X}_t)\}$ be a strictly stationary geometric $\beta$-mixing stochastic process, where $Y_t$ and $C_t$ are scalar-valued and strictly positive, and $\bs{X}_t$ is $\mathbb{R}^k$-valued. The $\beta$-mixing condition allows $\bs{X}_t$ to depend on the entire history of $\{Y_s\}_{s<t}$, thus accommodating autoregressive conditional duration models. Here, $Y_t$ denotes the duration, and $C_t$ denotes the censoring variable. We allow for covariate-dependent censoring, meaning that $C_t$ may be arbitrarily correlated with the covariates $\bs{X}_t$. Under right censoring, the observed duration is $Y_t^{c}\coloneq \min\{Y_t, C_t\}$. Let $D_t \coloneq \mathds{1}\{Y_t^{c} = Y_t\}$, which equals 1 if the observed duration is not censored. In this setting, we observe $\{(Y_t^c, D_t,\bs{X}_t)\}_{t=1}^n$, although $C_t$ is not observed when $D_t = 1$.

Following \citet{KhanTamer07joe}, we assume throughout that $Y_t$ and $C_t$ are conditionally independent given $\bs{X}_t$.\footnote{For dependent observations, Assumption \ref{assn1}(ii) in the Appendix additionally requires the joint conditional law of $(Y_t,C_t)$ given $\bs{X}_t$ and the full past to depend only on $\bs{X}_t$.} This assumption rules out endogenous censoring, in which $C_t$ is related to the duration itself beyond what $\bs{X}_t$ captures.

We assume that the uncensored duration $Y_t$ conditional on $\bs{X}_t$ follows a Weibull distribution with the following parameterization
\begin{equation}\label{eq:wblpdf}
f(y| \bs{x}; \delta^*, \bs{\beta}^*, \gamma^*) = \delta^* \gamma^* \varphi(\bs{x}; \bs{\beta}^*)y^{\gamma^*-1}\exp(-\delta^* \varphi(\bs{x};\bs{\beta}^*) y^{\gamma^*}) ,
\end{equation}
where $(\delta^*,\bs{\beta}^*,\gamma^*) \in D \times B \times \Gamma \subset \mathbb{R}^+ \times \mathbb{R}^d \times \mathbb{R}^+$, and $\varphi(\cdot;\bs{\beta})$ is strictly positive for every $\bs{\beta} \in B$. We assume that $\bs{X}_t$ does not include a constant term, because an intercept in $\bs{\beta}^*$ would not be separately identified from $\delta^*$: in the widely used specification $\varphi(\bs{x}; \bs{\beta}^*) = \exp(\bs{x}\t\bs{\beta}^*)$, we have $\delta^* \varphi(\bs{x}; \bs{\beta}^*) = \exp(\log \delta^* + \bs{x}\t\bs{\beta}^*)$, so that $\log \delta^*$ serves as the intercept.

We model the joint conditional density of $(Y_t^{c}, D_t)$ given $\bs{X}_t$ as
\begin{equation}\label{eq:model1} 
\begin{aligned}
f^{c}(y^{c},d | \bs{x};\delta^*,\bs{\beta}^*,\gamma^*) \coloneq & \left\{f(y^{c} | \bs{x};\delta^*,\bs{\beta}^*,\gamma^*)\right\}^{d}\left\{1-F(y^{c} | \bs{x};\delta^*,\bs{\beta}^*,\gamma^*)\right\}^{1-d} \\
& \times \left\{1-G(y^{c}|\bs{x})\right\}^{d}\left\{g(y^{c}|\bs{x})\right\}^{1-d},
\end{aligned} 
\end{equation}
where $f(\cdot| \bs{x}; \delta^*, \bs{\beta}^*, \gamma^*)$ is defined by \eqref{eq:wblpdf}, $F(\cdot| \bs{x};\delta^*,\bs{\beta}^*,\gamma^*)$ is the distribution function of $Y_t$ given $\bs{X}_t$, and $G(\cdot|\bs{x})$ is the distribution function of $C_t$ given $\bs{X}_t$. Here $g(\cdot|\bs{x})$ denotes the density of $G(\cdot|\bs{x})$ with respect to a $\sigma$-finite measure $\nu_{\bs{x}}$ that dominates the conditional distribution of $C_t$ given $\bs{X}_t = \bs{x}$, so that $(Y_t, C_t)$ has a conditional density with respect to the product of Lebesgue measure and $\nu_{\bs{x}}$.\footnote{Correspondingly, \eqref{eq:model1} is a density of $(Y_t^c, D_t)$ with respect to the measure that equals Lebesgue measure on $\{d = 1\}$ and $\nu_{\bs{x}}$ on $\{d = 0\}$, because $Y_t^c = Y_t$ when $D_t = 1$ and $Y_t^c = C_t$ when $D_t = 0$. When $C_t$ is continuously distributed, $\nu_{\bs{x}}$ is Lebesgue measure and $g(\cdot|\bs{x})$ is the ordinary conditional density; when the conditional distribution of $C_t$ is discrete or degenerate, $\nu_{\bs{x}}$ is counting measure on its support.} This formulation imposes no restriction on the form of $G(\cdot|\bs{x})$ beyond the conditional independence of $Y_t$ and $C_t$ given $\bs{X}_t$. For maximum likelihood estimation in this model, we can drop the terms $\{1-G(y^{c}|\bs{x})\}^{d}$ and $\{g(y^{c}|\bs{x})\}^{1-d}$ from the likelihood function because they enter multiplicatively and do not depend on the model parameters.

Model \eqref{eq:model1} encompasses random censoring and fixed (Type I) censoring as special cases. In the case of random censoring, $C_t$ is independent of $(Y_t,\bs{X}_t)$. Consequently, model \eqref{eq:model1} simplifies to 
\begin{align}\label{eq:model_random}
f^{cr}(y^{c},d | \bs{x};\delta^*,\bs{\beta}^*,\gamma^*) \coloneq & \left\{f(y^{c} | \bs{x};\delta^*,\bs{\beta}^*,\gamma^*)\right\}^{d}\left\{1-F(y^{c} | \bs{x};\delta^*,\bs{\beta}^*,\gamma^*)\right\}^{1-d}\notag \\
	& \times \left\{1-G(y^{c})\right\}^{d}\left\{g(y^{c})\right\}^{1-d}.
\end{align}
In the case of fixed censoring, $C_t=c<\infty$ with probability one. Then, model \eqref{eq:model1} reduces to 
\begin{align}\label{eq:model_fixed}
f^{cf}(y^{c},d | \bs{x};\delta^*,\bs{\beta}^*,\gamma^*) \coloneq \left\{f(y^{c} | \bs{x};\delta^*,\bs{\beta}^*,\gamma^*)\right\}^{d}\left\{1-F(c| \bs{x};\delta^*,\bs{\beta}^*,\gamma^*)\right\}^{1-d}
\end{align}
on the support $\{(y^c,1): 0 < y^c < c\} \cup \{(c,0)\}$: uncensored observations satisfy $0 < y^c < c$, and censored ones form an atom at $(c,0)$ with probability $1-F(c|\bs{x};\delta^*,\bs{\beta}^*,\gamma^*)$.

\section{Tests for unobserved heterogeneity}\label{sec:tests}

This section develops a test for the presence of unobserved heterogeneity in $\delta^*$ based on a finite mixture model. 

\subsection{Modeling unobserved heterogeneity using a mixture model}

Following \citet{heckmansinger84em} and \citet{chowhite10joe}, we model unobserved heterogeneity in $\delta^*$ using a two-point discrete distribution with point masses at $\delta_{1}^*$ and $\delta_{2}^*$. The discrete mixture specification has been widely adopted in the literature; see, for example, \citet{Gritz1993}, \citet{Ferrall1997}, and \citet{vandenberg98em}. As argued by \citet{chowhite10joe}, finite mixture models offer a flexible approximation to the unknown heterogeneity distribution and facilitate tests capable of detecting a broad range of unobserved heterogeneity. A commonly used alternative is the gamma mixture model, which assumes that $\delta^*$ follows a gamma distribution and offers an analytically convenient framework for incorporating unobserved heterogeneity into duration models \citep{Lancaster1979, Meyer1990, Han1990}. However, this approach has been criticized by \citet{heckmansinger84em} as being somewhat ad hoc. In particular, the gamma distribution may fail to capture key features of the true heterogeneity distribution, such as multimodality or skewness. Another alternative is to leave the heterogeneity distribution unspecified and estimate it by nonparametric maximum likelihood \citep{gukoenkervolgushev18et}.

With a two-point distribution of $\delta^*$, the joint conditional density of $(Y_t^{c}, D_t)$ given $\bs{X}_t$ is written as a mixture of Weibull densities
\begin{align}\label{eq:model_mixture}
f^{c}_{a}(y^{c},d | \bs{x};\pi^*,\delta_{1}^*, \delta_{2}^*,\bs{\beta}^*,\gamma^*) \coloneq & \pi^* f^{c}(y^{c},d | \bs{x};\delta_1^*,\bs{\beta}^*,\gamma^*) + (1-\pi^*)f^{c}(y^{c},d | \bs{x};\delta_2^*,\bs{\beta}^*,\gamma^*) ,
\end{align}
where $f^{c}(y^{c},d | \bs{x};\delta^*,\bs{\beta}^*,\gamma^*)$ is defined in \eqref{eq:model1} and $(\pi^*,\delta_{1}^*,\delta_{2}^*,\bs{\beta}^*,\gamma^*) \in [0,1] \times D\times D \times B \times \Gamma \subset [0,1] \times \mathbb{R}^{+} \times \mathbb{R}^{+}\times \mathbb{R}^{d} \times \mathbb{R}^{+}$. In this model, unobserved heterogeneity is absent when $\delta_{1}^* = \delta_{2}^*$ or $\pi^*(1-\pi^*) = 0$. For maximum likelihood estimation of the parameters in this model, we can drop the terms $\{1-G(y^{c}|\bs{x})\}^d$ and $\{g(y^{c}|\bs{x})\}^{1-d}$ in $f^{c}(y^{c},d | \bs{x};\delta,\bs{\beta},\gamma)$, because these terms enter multiplicatively in both $f^{c}(y^{c},d | \bs{x}; \delta_{1}, \bs{\beta}, \gamma)$ and $f^{c}(y^{c},d | \bs{x}; \delta_{2}, \bs{\beta}, \gamma)$ and $\pi + (1-\pi) = 1$.

Using model \eqref{eq:model_mixture}, we can express the null hypothesis of no unobserved heterogeneity as the parameter restrictions
\begin{equation} \label{eq:null_hypothesis}
H_0: \delta_{1}^* = \delta_{2}^*\ \text{ or }\ \pi^*(1-\pi^*) = 0\ \text{ against }\ H_1: \pi^* \in (0,1) \text{ and } \delta_{1}^* \neq \delta_{2}^*.
\end{equation}

\subsection{EM test for unobserved heterogeneity}

In this section, we develop an EM test for unobserved heterogeneity, building on the EM approach of \citet{lcm09bm}. The proposed test is primarily motivated by the limitations of the LRT when applied to testing $H_0$ against $H_1$ in equation \eqref{eq:null_hypothesis}. Under the null hypothesis, some parameters are not identified: in model \eqref{eq:model_mixture}, $\delta_2^*$ is not identified when $\pi^*= 1$, $\delta_1^*$ is not identified when $\pi^*= 0$, and $\pi^*$ is not identified when $\delta_1^* = \delta_2^*$. As documented by \citet{chowhite10joe}, this lack of identification under the null leads to several undesirable properties of the LRT in the context of censored duration models. 

First, the asymptotic null distribution of the LRT statistic is unknown when censoring is covariate-dependent.\footnote{
\citet{chowhite10joe} derive the asymptotic distribution of the LRT statistic under fixed censoring and random censoring.} Second, even in the fixed and random censoring cases, the asymptotic null distribution of the LRT statistic is nonstandard and must be simulated because it is a functional of a Gaussian process whose covariance structure depends on the covariate and censoring mechanism in a complex way. Third, the LRT statistic and its asymptotic null distribution depend on the parameter space employed in maximizing the log-likelihood, so the choice of the parameter space affects the power of the LRT. Fourth, the LRT requires the lower bound of the parameter space of $\delta$ to exceed $0.5\delta^*$, a technical condition that ensures a finite variance of the likelihood ratio but has no economic justification.

Given these problems associated with the LRT, we focus on testing $\delta_{1}^* = \delta_{2}^*$ in $H_0$ in \eqref{eq:null_hypothesis}. Focusing on $\delta_{1}^* = \delta_{2}^*$ involves no loss of generality for the null data-generating process, because under either branch of $H_0$ the observed data are generated by a single Weibull density, which can always be represented with $\delta_{1}^* = \delta_{2}^*$. This restriction may reduce the test's power against alternatives with $\pi^*(1-\pi^*) \simeq 0$, such as discrete mixture 2 in Section \ref{sec:MC}. Our simulations indicate that this loss is small: under that design, the EM test and the LRT have similarly low power.

Define the log-likelihood function under the null hypothesis as
\begin{equation} \label{loglik0}
L_{0n}(\delta, \bs{\beta}, \gamma) \coloneq \sum_{t=1}^n \log f^{c}(Y_t^{c},D_t | \bs{X}_t;\delta,\bs{\beta},\gamma),
\end{equation}
where $f^{c}(y^{c},d | \bs{x};\delta,\bs{\beta},\gamma)$ is defined in \eqref{eq:model1}. Define the maximum likelihood estimator (MLE) of the parameters under the null hypothesis as
\[
(\widehat \delta, \widehat{\bs{\beta}}, \widehat \gamma) \coloneq \mathop{\arg\max}_{(\delta,\bs{\beta},\gamma) \in D \times B \times \Gamma} L_{0n}(\delta, \bs{\beta}, \gamma).
\]
Define the log-likelihood function under the two-point discrete mixture model as
\begin{equation} \label{loglik}
L_{n}(\pi, \delta_{1}, \delta_{2}, \bs{\beta}, \gamma) \coloneq \sum_{t=1}^n \log f_{a}^{c}(Y_t^{c},D_t | \bs{X}_t; \pi, \delta_{1}, \delta_{2}, \bs{\beta}, \gamma),
\end{equation}
with $f^{c}_{a}(y^{c},d | \bs{x};\pi,\delta_{1}, \delta_{2},\bs{\beta},\gamma)$ defined in \eqref{eq:model_mixture}. Define the penalized log-likelihood function under the two-point discrete mixture model as
\begin{equation} \label{pen_loglik}
PL_{n}(\pi, \delta_{1}, \delta_{2}, \bs{\beta}, \gamma) \coloneq L_{n}(\pi, \delta_{1}, \delta_{2}, \bs{\beta}, \gamma) + p(\pi),
\end{equation}
where $p(\pi)$ is a penalty function on $\pi$. We specify $p(\pi)$ in Section \ref{sec:MC}. Its purpose is to penalize values of $\pi$ that are too close to zero or one and to ensure that the model stays away from regions of the parameter space where identification is weak or lost.

Let $\bs{\theta} \coloneq (\delta_1, \delta_2, \bs{\beta}\t, \gamma)\t \in \Theta\coloneq D^2 \times B \times \Gamma$ and $\bs{\theta}^* \coloneq (\delta^*, \delta^*, (\bs{\beta}^*)\t, \gamma^*)\t$. Then, the parameter of model \eqref{eq:model_mixture} is given by $(\pi,\bs{\theta})$, and the penalized log-likelihood function is written as $PL_{n}(\pi,\bs{\theta})$. Let $\Pi$ be a finite subset of $(0,0.5]$. For each $\pi_0 \in \Pi$, define $\pi^{(1)}(\pi_0)=\pi_0$, and define $\bs{\theta}^{(1)}(\pi_0)$ as
\[
\bs{\theta}^{(1)}(\pi_0) \coloneq \mathop{\arg\max}_{\bs{\theta} \in \Theta} PL_{n}(\pi^{(1)}(\pi_0), \bs{\theta}).
\]
Starting from $(\pi^{(1)}(\pi_0),\bs{\theta}^{(1)}(\pi_0))$, we iteratively update $\pi$ and $\bs{\theta}$ using EM steps as in \citet{lcm09bm}. Suppose that $\pi^{(k)}(\pi_0)$ and $\bs{\theta}^{(k)}(\pi_0)$ have been calculated. For $t=1,\ldots,n$, define the weights for the E-step as
\begin{align*}
w_{1t}^{(k)}(\pi_0) &\coloneq \frac{\pi^{(k)}f^c(Y_t^{c}, D_t| \bs{X}_t; \delta_1^{(k)}(\pi_0), \bs{\beta}^{(k)}(\pi_0), \gamma^{(k)}(\pi_0) )}{f_{a}^c(Y_t^{c},D_t| \bs{X}_t; \pi^{(k)}(\pi_0), \bs{\theta}^{(k)}(\pi_0))}, \quad w_{2t}^{(k)}(\pi_0) \coloneq 1- w_{1t}^{(k)}(\pi_0).
\end{align*}
In the M-step, update $\pi$ and $\bs{\theta}$ as follows
\begin{align*}
\pi^{(k+1)}(\pi_0) &\coloneq \mathop{\arg \max}_{\pi \in [0,1]} \left\{ \sum_{t=1}^n w_{1t}^{(k)} \log(\pi) + \sum_{t=1}^n w_{2t}^{(k)} \log (1-\pi) + p(\pi) \right\},
\end{align*}
where we suppress the dependence of $w_{1t}^{(k)}(\pi_0)$ on $\pi_0$ for notational simplicity. The update for $\bs{\theta}$ is given by
\begin{align*}
\bs{\theta}^{(k+1)}(\pi_0) &\coloneq \mathop{\arg \max}_{\bs{\theta}\in \Theta} \left\{ \sum_{t=1}^n \sum_{j=1}^{2} w_{jt}^{(k)} \log f^c(Y_t^{c},D_t| \bs{X}_t; \delta_j, \bs{\beta}, \gamma) \right\}.
\end{align*}
Note that an EM step never decreases the penalized log-likelihood. For a pre-specified integer $K \geq 1$, define the test statistic for $\pi_0 \in \Pi$ after $K-1$ EM steps as
\[
M_n^{K}(\pi_0) \coloneq 2\left\{PL_{n}(\pi^{(K)}(\pi_0),\bs{\theta}^{(K)}(\pi_0)) - L_{0n}(\widehat \delta, \widehat{\bs{\beta}}, \widehat \gamma) \right\},
\]
where $L_{0n}(\widehat{\delta}, \widehat{\bs{\beta}}, \widehat{\gamma})$ is the maximized log-likelihood under the null. 

The \textit{EM test statistic} is defined as the maximum of $M_n^{K}(\pi_0)$ over $\pi_0\in \Pi$,
\[
\text{EM}_n^{K} \coloneq \max \left\{M_n^{K}(\pi_0): \pi_0 \in \Pi \right\}.
\]
The following proposition gives the asymptotic null distribution of the EM test statistic. Assumptions \ref{assn1}--\ref{assn3} are presented in Appendix \ref{app:assumptions} and correspond to Assumptions A1--A3 in \citet{chowhite10joe} except that we explicitly state assumptions on $C_t$. In addition, our \Cref{assn_3.6} strengthens their Assumption A3.6 to be consistent with Assumption A.5(iii) in \citet{chowhite07em}. These assumptions allow for dependent data and are applicable to autoregressive conditional duration models.
\begin{proposition} \label{EM_stat}
Suppose that Assumptions \ref{assn1}--\ref{assn3} hold and $\delta_1^*=\delta_2^*$. Suppose that $p(\pi)$ is nonpositive and continuous, $p(0.5)=0$, and $p(\pi) \to -\infty$ as $\pi \to 0$ or $\pi \to 1$. Suppose further that $0.5 \in \Pi$. Then, for any fixed finite $K$,
\[
\text{EM}_{n}^{K} \rightarrow_d (\max\{0,N(0,1)\})^2.
\]
\end{proposition}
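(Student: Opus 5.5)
We follow the strategy of \citet{lcm09bm} and \citet{kasaharashimotsu15jasa}: establish the result for $K=1$ with fixed $\pi_0$, then show the EM steps do not change the limit. Throughout, write $\bs{\vartheta}=(\delta,\bs{\beta}\t,\gamma)\t$ for the null parameter and reparameterize the mixture at fixed $\pi$ as $\lambda=\delta_1-\delta_2$ and $\nu=\pi\delta_1+(1-\pi)\delta_2$, so that $\bs{\eta}=(\nu,\bs{\beta}\t,\gamma)\t$ plays the role of $\bs{\vartheta}$. The score of $f^c$ with respect to $\delta$ is $\nabla_\delta \log f^c = d/\delta - \varphi(\bs{x};\bs{\beta})(y^c)^{\gamma}$, and the censoring factors $G,g$ cancel from all ratios, so covariate-dependent censoring only enters through expectations.

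\textbf{Step 1 (consistency).} Fix $\pi_0\in(0,0.5]$. Since $p(\pi_0)$ is a finite constant and $PL_n(\pi_0,\bs{\theta})\ge L_{0n}(\widehat{\bs{\vartheta}})+p(\pi_0)$, a uniform law of large numbers for $\beta$-mixing data (Assumptions \ref{assn1}--\ref{assn3}) together with identification of the Weibull model gives $\bs{\theta}^{(1)}(\pi_0)\to_p\bs{\theta}^*$.

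\textbf{Step 2 (quadratic expansion).} Expand $L_n(\pi_0,\bs{\eta},\lambda)-L_{0n}(\bs{\vartheta}^*)$ to fourth order around $(\bs{\vartheta}^*,0)$, following the argument of \citet{kasaharashimotsu15jasa}. Because $\pi_0$ is fixed and $\nu$ absorbs the linear term, the first-order derivative in $\lambda$ vanishes. The second-order derivative in $\lambda$ is $\pi_0(1-\pi_0)\nabla_{\delta\delta}f^c/f^c$, which gives the score-type variable
\[
s_t=\frac{\nabla_{\delta\delta}f^c(Y^c_t,D_t|\bs{X}_t;\bs{\vartheta}^*)}{2f^c}.
\]
Using $\nabla_{\delta\delta}f^c/f^c=(D/\delta-\varphi Y^{c\gamma})^2-D/\delta^2$, one obtains
\[
L_n-L_{0n}(\bs{\vartheta}^*)=\sqrt n\,\bs{t}\t\bs{S}_n-\tfrac12 n\,\bs{t}\t\mathcal{I}\bs{t}+o_p(1),
\]
uniformly on $\|\bs{t}\|=O_p(n^{-1/2})$, where $\bs{t}=(\bs{\eta}-\bs{\vartheta}^*,\;\pi_0(1-\pi_0)\lambda^2)$ and $\bs{S}_n$ collects the null score and $n^{-1/2}\sum s_t$. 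The main obstacle lies here: bounding third- and fourth-order remainders uniformly, which requires moments of $\varphi^k Y^{\gamma k}$ and a positive definite $\mathcal{I}$ (no linear dependence between $s_t$ and the null scores). This is where the strengthened \Cref{assn_3.6} is used. We then show $\lambda^2=O_p(n^{-1/2})$ via the standard argument that the quartic term dominates.

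\textbf{Step 3 (limit).} Because $\lambda^2\ge0$, the feasible set for the last coordinate of $\bs{t}$ is the half-line. Concentrating out $\bs{\eta}$ and subtracting $L_{0n}(\widehat{\bs{\vartheta}})-L_{0n}(\bs{\vartheta}^*)$ yields
\[
M_n^1(\pi_0)=\bigl(\max\{0,Z_n\}\bigr)^2+p(\pi_0)+o_p(1),
\]
where $Z_n$ is the standardized efficient score for $s_t$ after projecting out the null scores. A martingale/mixing CLT gives $Z_n\to_d N(0,1)$, and $Z_n$ does not depend on $\pi_0$.

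\textbf{Step 4 (EM steps and maximum).} By induction on $k\le K$, show that $\pi^{(k)}(\pi_0)-\pi_0=o_p(1)$ and $\bs{\theta}^{(k)}(\pi_0)-\bs{\theta}^*=O_p(n^{-1/4})$. For the first claim, the weights satisfy $n^{-1}\sum_t w_{1t}^{(k)}=\pi^{(k)}+o_p(1)$ because $\lambda^{(k)}\to_p0$, and the M-step is continuous in this average given the penalty. Since $PL_n$ is non-decreasing along EM iterations, and the expansion in Step 2 holds uniformly in $\pi$ over compact subsets of $(0,1)$, we obtain
\[
M_n^K(\pi_0)=\bigl(\max\{0,Z_n\}\bigr)^2+p(\pi^{(K)})+o_p(1).
\]
Since $p\le0$ and $p(\pi^{(K)})\to_p p(\pi_0)$, this gives $M_n^K(\pi_0)\to_d(\max\{0,Z\})^2+p(\pi_0)$ jointly over the finite set $\Pi$, with a common $Z$. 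The maximum over $\Pi$ is attained at $\pi_0=0.5$, where $p(0.5)=0$, so $\text{EM}_n^K\to_d(\max\{0,N(0,1)\})^2$.
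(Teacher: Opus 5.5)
Your proposal follows the paper's proof essentially step for step: consistency of $\bs{\theta}^{(1)}(\pi_0)$; the reparameterization $\lambda=\delta_1-\delta_2$, $\nu=\pi\delta_1+(1-\pi)\delta_2$ with a fourth-order expansion that is quadratic in $(\bs{\eta}-\bs{\eta}^*,\pi(1-\pi)\lambda^2)$, whose score and limiting information do not depend on $\pi$; the Andrews-type rate argument and projection giving $(\max\{0,Z_n\})^2$; and EM monotonicity, an expansion uniform in $\pi$, and $p\le 0=p(0.5)$ to finish. Two small corrections: the penalty enters as $2p(\pi_0)$, not $p(\pi_0)$, because $M_n^K$ is twice the log-likelihood difference. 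Also, the rate $\bs{\theta}^{(k)}-\bs{\theta}^*=O_p(n^{-1/4})$ is neither proved nor needed, and it would not let you plug $\bs{\theta}^{(K)}$ into an expansion stated only on $\|\bs{t}\|=O_p(n^{-1/2})$. The paper uses only consistency of the iterates and gets the upper bound on $M_n^K(\pi_0)$ from the maximizer of $PL_n(\pi^{(K)},\cdot)$ over a small neighborhood of $\bs{\theta}^*$, to which your Step 2 rate argument applies uniformly in $\pi$.
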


\subsection{LRT for unobserved heterogeneity}\label{sec:LRT}

\citet{chowhite10joe} derive the asymptotic null distribution of the LRT statistic under fixed and random censoring. This section reviews their results. Let $A = [\alpha_L,\alpha_H]$, where $0 < \alpha_L \leq 1 \leq \alpha_H < \infty$, denote the set of admissible values for $\alpha_1=\delta_1/\delta^*$ and $\alpha_2=\delta_2/\delta^*$.

Define the MLE of the two-point discrete mixture model as
\[
(\widehat \pi, \widehat \delta_1, \widehat \delta_2, \widehat{\bs{\beta}}_{a}, \widehat \gamma_{a}) \coloneq \mathop{\arg\max}_{(\pi, \delta_1,\delta_2,\bs{\beta},\gamma) \in [0,1] \times \widehat{A} \times \widehat{A} \times B \times \Gamma} L_{n} (\pi, \delta_{1}, \delta_{2}, \bs{\beta}, \gamma) ,
\]
where $L_{n} (\pi, \delta_{1}, \delta_{2}, \bs{\beta}, \gamma)$ is defined in \eqref{loglik}, and $\widehat{A} \coloneq [\alpha_L \widehat\delta, \alpha_H \widehat\delta]$. The LRT statistic for testing $H_0$ is 
\[
LR_n \coloneq 2 \left[ L_{n} (\widehat \pi, \widehat \delta_1, \widehat \delta_2, \widehat{\bs{\beta}}_{a}, \widehat \gamma_{a}) - L_{0n}(\widehat \delta, \widehat{\bs{\beta}}, \widehat \gamma) \right].
\]

\citet[][p.\ 465]{chowhite10joe} show that, if $\inf A > 1/2$, the asymptotic null distribution of $LR_n$ is given by
\begin{equation} \label{eq:cho_white_g}
\sup_{\alpha \in A}\left( \max[0,\mathcal{G}^c(\alpha)] \right)^2,
\end{equation}
where $\mathcal{G}^c(\alpha)$ is a Gaussian process with covariance kernel $E[\mathcal{G}^c(\alpha)\mathcal{G}^c(\alpha')]$. As noted by \citet[][p.\ 465]{chowhite10joe}, this covariance depends on the joint distribution of $(\bs{X}_t, C_t)$ and $(\alpha,\alpha')$. Due to the difficulty of consistently estimating $E[\mathcal{G}^c(\alpha)\mathcal{G}^c(\alpha')]$ across all grid points $(\alpha,\alpha')$, \citet{chowhite10joe} omit direct analysis of the covariance kernel and instead use the weighted bootstrap procedure of \citet{hansen96em} to obtain critical values. A parametric bootstrap would require generating censored samples under the null. This is feasible under fixed censoring and random censoring. Under covariate-dependent censoring, however, it would require the conditional distribution $G(\cdot|\bs{x})$, which the model leaves unspecified and whose estimation would require a nonparametric first step that smooths over $\bs{x}$. The weighted bootstrap avoids this step by reweighting the scores computed from the observed sample rather than generating new observations.

\section{Monte Carlo experiments}\label{sec:MC}

In this section, we use Monte Carlo simulations to compare the size and power properties of the EM test, the LRT, the IM test of \citet{Chesher1984} and \citet{Lancaster1984}, and the LM tests of \citet{Sharma1987}. 

\subsection{Covariate-independent censoring}
We first consider the setting in which the censoring process is independent of the covariates. Specifically, we assume that, conditional on the covariate $\bs{X}_t$ and the heterogeneity parameter $\delta$, the uncensored duration $Y_t$ follows a Weibull distribution and the censoring variable $C_t$ follows an exponential distribution. Their respective density functions are given by
\begin{equation}\label{dgp_independent}
\begin{aligned}
	f(y | \bs{x}; \delta, \bs{\beta}, \gamma) &=
	\begin{cases} \delta \gamma \exp(\bs{x}\t\bs{\beta}) y^{\gamma - 1} \exp \left(-\delta \exp(\bs{x}\t\bs{\beta})y^{\gamma} \right) & y > 0, \\
	0 & y \leq 0,
	\end{cases} \\
	g(c|\bs{x}; \lambda) &= 
	\begin{cases} 
	\lambda \exp(-\lambda c) & c \geq 0, \\
	0 & c < 0.
	\end{cases} 
\end{aligned}
\end{equation}
We assume that $C_t$ is independent of $(Y_t, \bs{X}_t)$, and that $\delta$ is i.i.d.\ across observations and independent of $(\bs{X}_t, C_t)$. The covariate $\bs{X}_t$ is one-dimensional and drawn from the standard normal distribution. The parameters $(\bs{\beta}, \gamma)$ are unknown. The parameter $\delta$ captures unobserved heterogeneity: under the null hypothesis of homogeneity, $\delta$ is constant, whereas under the alternative, $\delta$ follows a non-degenerate distribution, so that the distribution of $Y_t$ given $\bs{X}_t$ is a mixture of Weibull distributions. We specify the distribution of $\delta$ under the alternative below. For each sample size $n$, we generate $(Y_t, C_t, \bs{X}_t)$ from this process and conduct the tests on the observed data $\{(Y^c_t, D_t, \bs{X}_t)\}_{t=1}^n$.

We describe the implementation details of the EM test, the LRT, the IM test, and the LM tests. In the EM test, we set $\Pi = \{ 0.05, 0.1, 0.15, 0.2, 0.25, 0.3, 0.35, 0.4, 0.45, 0.5 \}$ and use the penalty function $p(\pi) = 10 \log (1 - |1 - 2 \pi|)$, following the functional form suggested by \citet{lcm09bm}. We consider $K = 1, 2, 3$, corresponding to zero, one, and two EM steps.

The LRT is implemented following the procedure described in Section \ref{sec:LRT}. For the specification of the set of admissible values $A$, we adopt three options from the simulation setup in \citet{chowhite10joe}: a narrower range $[7/9, 2]$, a moderate range $[2/3, 3]$, and a wider range $[5/9, 4]$. Each of these intervals contains 1 and satisfies $\inf A >1/2$. As the asymptotic null distribution in equation \eqref{eq:cho_white_g} is intractable, we obtain critical values using the weighted bootstrap procedure of \citet{hansen96em}. First, we discretize $A$ using equally spaced grid points $0.01$ apart. For each grid point $\alpha \in A$, we compute the score $\widehat S_{nt} (\alpha) \coloneq \{ \widehat D_{nt} (\alpha) \}^{-1/2} \widehat W_{nt} (\alpha)$, where
\begin{align*} 
\widehat D_{nt} (\alpha) \coloneq \ &\frac{1}{n} \sum_{t = 1}^n (1 - \widehat R_{nt} (\alpha))^2 - \frac{1}{n} \sum_{t = 1}^n (1 - \widehat R_{nt} (\alpha)) \widehat{\bs{U}}_{nt}\t \left( \frac{1}{n} \sum_{t = 1}^n \widehat{\bs{U}}_{nt} \widehat{\bs{U}}_{nt}\t \right)^{-1} \\
	&\times \frac{1}{n} \sum_{t = 1}^n \widehat{\bs{U}}_{nt} (1 - \widehat R_{nt} (\alpha)), \\
	\widehat W_{nt} (\alpha) \coloneq \ &(1 - \widehat R_{nt} (\alpha)) - \widehat{\bs{U}}_{nt}\t \left( \frac{1}{n} \sum_{t = 1}^n \widehat{\bs{U}}_{nt} \widehat{\bs{U}}_{nt}\t \right)^{-1} \frac{1}{n} \sum_{t = 1}^n \widehat{\bs{U}}_{nt} (1 - \widehat R_{nt} (\alpha)), \\
	\widehat R_{nt} (\alpha) \coloneq \ &f^{c} (Y_t^c, D_t| \bs{X}_t; \alpha \widehat \delta, \widehat{\bs{\beta}}, \widehat \gamma) / f^{c} (Y_t^c, D_t| \bs{X}_t; \widehat \delta, \widehat{\bs{\beta}}, \widehat \gamma), \\
	\widehat{\bs{U}}_{nt} \coloneq \ &\nabla_{(\delta,\bs{\beta}, \gamma)\t} \log f^{c} (Y_t^c, D_t| \bs{X}_t; \widehat \delta, \widehat{\bs{\beta}}, \widehat \gamma), 
\end{align*}
and $f^c$ is defined in \eqref{eq:model1}. We then generate a sequence of i.i.d.\ standard normal random variables $\{ Z_{jt} : j=1,\ldots, J; t = 1, \ldots, n \}$ and simulate the asymptotic null distribution of $LR_n$ using the empirical distribution of 
\[
\mathcal{LR}_{jn}(A) \coloneq \sup_{\alpha \in A} \left( \max \left[ 0, \frac{1}{\sqrt{n}} \sum_{t = 1}^n \widehat S_{nt} (\alpha) Z_{jt} \right] \right)^2, \quad j = 1, \dots, J.
\]
Following \citet{chowhite10joe}, we set the bootstrap sample size to $J=500$ in all simulations.

For the IM test, we follow \citet{chowhite10joe} and implement a version of the IM test statistic of \citet{Lancaster1984} that focuses on the diagonal element associated with $\delta$. The IM test statistic is defined as $IM_n \coloneq \bs{\iota}\t \bs{W}(\bs{W}\t \bs{W})^{-1}\bs{W}\t \bs{\iota}$, where $\bs{\iota}$ is an $n \times 1$ vector of ones, and $\bs{W}$ is defined as
\begin{equation} 
	\bs{W} \coloneq 
	\begin{pmatrix} 
		\widehat{\bs{S}}_1 \t & \widehat H_1 \\
		\vdots & \vdots \\
		\widehat{\bs{S}}_n \t & \widehat H_n
	\end{pmatrix}, \notag
\end{equation}
with $\widehat{\bs{S}}_t \coloneq \nabla_{(\delta, \bs{\beta}, \gamma)\t} \log f^{c} (Y_t^c, D_t| \bs{X}_t; \widehat \delta, \widehat{\bs{\beta}}, \widehat \gamma)$ and
\[
\widehat H_{t} \coloneq \left[ \frac{\partial}{\partial \delta} \log f^{c} (Y_t^c, D_t| \bs{X}_t; \widehat \delta, \widehat{\bs{\beta}}, \widehat \gamma) \right]^2 + \frac{\partial^2}{\partial \delta^2} \log f^{c} (Y_t^c, D_t| \bs{X}_t; \widehat \delta, \widehat{\bs{\beta}}, \widehat \gamma).
\]
Under the null hypothesis, $IM_n$ converges in distribution to $\chi^2_1$.

For the LM tests, we follow \citet{chowhite10joe} and implement a version of the LM test statistic of \citet{Sharma1987} that accounts for parameter estimation error, as noted by \citet{Prieger00wp}. For $t = 1, \dots, n$ and $q=2,3$, define
\[ 
\widehat{\bs{L}}_{q, t} \coloneq \left[ L_2 \left(\{ \widehat \delta \exp( \bs{X}_t\t \widehat{\bs{\beta}})\}^{1/\widehat \gamma}Y_t^c; D_t, \widehat \gamma \right), \dots, L_q \left(\{ \widehat \delta \exp( \bs{X}_t\t \widehat{\bs{\beta}})\}^{1/\widehat \gamma}Y_t^c; D_t, \widehat \gamma \right) \right]\t,
\]
where the subscript $q$ denotes the order of the Laguerre polynomials, and the censoring-adjusted Laguerre polynomials are defined as
\begin{align*} 
	L_2(x; D, \gamma) &\coloneq \frac{1}{2} (x^{2 \gamma} - 4 x^{\gamma} + 2) + (1 - D) (x^{\gamma} - 1), \\
	L_3(x; D, \gamma) &\coloneq \frac{1}{6} (- x^{3 \gamma} + 9 x^{2 \gamma} - 18 x^{\gamma} + 6) + \frac{1}{2} (1 - D) (-x^{2 \gamma} + 4 x^{\gamma} - 2). 
\end{align*}
Define $\overline{\bs{L}}_{q, n} \coloneq n^{-1} \sum_{t = 1}^n \widehat{\bs{L}}_{q, t}$, $\widehat{\bs{W}}_{q, n} \coloneq \widehat{\bs{V}}_{q, n} - \widehat{\bs{H}}_{q, n} \widehat{\bs{B}}_n^{-1} \widehat{\bs{H}}_{q, n}\t$, where $\widehat{\bs{V}}_{q, n} \coloneq n^{-1} \sum_{t = 1}^n (\widehat{\bs{L}}_{q,t} - \overline{\bs{L}}_{q, n}) (\widehat{\bs{L}}_{q, t} - \overline{\bs{L}}_{q, n})\t$, $\widehat{\bs{H}}_{q, n} \coloneq n^{-1} \sum_{t = 1}^n (\widehat{\bs{L}}_{q, t} - \overline{\bs{L}}_{q, n}) \widehat{\bs{S}}_{t}\t$, and $\widehat{\bs{B}}_n \coloneq n^{-1} \sum_{t = 1}^n \widehat{\bs{S}}_t\widehat{\bs{S}}_t\t$. Define the LM test statistics as $LM_{q,n} \coloneq n \overline{\bs{L}}_{q, n}\t \{ \widehat{\bs{W}}_{q, n} \}^{-1} \overline{\bs{L}}_{q, n}$. Under the null hypothesis, $LM_{q,n}$ converges in distribution to $\chi_{q - 1}^2$. For $q = 2$, because $\widehat{\bs{L}}_{2,t} = (\widehat\delta^2/2)\widehat H_t + \widehat\delta\, \widehat S_{\delta,t}$, where $\widehat S_{\delta,t}$ is the first element of $\widehat{\bs{S}}_t$, and $\sum_{t=1}^n \widehat{\bs{S}}_t = \bs{0}$ at the null MLE, we have $LM_{2,n} = IM_n/(1 - IM_n/n)$. Consequently, the IM and LM$_2$ tests have identical size-adjusted power.
 
Table \ref{table level independent} reports the empirical rejection rates of the tests under the null hypothesis. The parameters of the data-generating process \eqref{dgp_independent} are set to $(\delta, \bs{\beta}, \gamma, \lambda) = (1,1,1,1)$, so that about half of the observations are censored, and the nominal significance level is fixed at 5\%. The EM, IM, and LM tests use asymptotic critical values, while the LRT employs bootstrapped critical values.

The EM test exhibits mild size distortion in small samples: its rejection rates are 11\% to 12\% at $n = 50$ and about 8\% at $n = 100$, regardless of $K$. The empirical size quickly approaches the nominal level as the sample size increases. For $n \geq 500$, rejection rates are close to 5\%. In contrast, the LRT is conservative across all sample sizes, particularly when the admissible set $A$ is narrow (e.g., $[7/9, 2]$). Although rejection rates increase as the range of $A$ widens, they remain slightly below the nominal level even at $n = 5000$. The IM and LM tests exhibit size distortions, particularly in small and moderate samples. For instance, the IM test rejects the null in 16.18\% of replications at $n = 50$ and 12.40\% at $n = 100$, while LM$_2$ performs similarly. LM$_3$ severely over-rejects the null even in large samples.

Table \ref{table level robust} examines how the size of the EM test reported in Table \ref{table level independent} depends on the censoring fraction and the covariate distribution. We set the censoring fraction to 25\%, 50\%, or 75\% by changing the parameter $\lambda$ in the distribution of the censoring variable $C_t$ in \eqref{dgp_independent} and draw $\bs{X}_t$ from either the standard normal distribution or the $\chi^2_1$ distribution standardized to have mean zero and unit variance. For $n \geq 500$, the rejection rates lie between 4.3\% and 5.7\% in every design, confirming that the EM test controls size well at these sample sizes. At $n = 50$, the size distortion increases with the censoring fraction.

To evaluate the power properties of the tests, we consider the following six alternative distributions for the unobserved heterogeneity parameter $\delta$ in \eqref{dgp_independent}:
\begin{enumerate}
    \item Discrete mixture 1: $\delta \sim \text{DM}(0.7370, 1.9296;\ 0.5)$
    \item Discrete mixture 2: $\delta \sim \text{DM}(0.7370, 2.4222;\ 0.95)$
    \item Gamma mixture: $\delta \sim \text{Gamma}(5,\ 5)$
    \item Log-normal mixture: $\delta \sim \text{Log-normal}(-\log(1.2)/2,\ \log(1.2))$
    \item Uniform mixture 1: $\delta \sim \text{Uniform}[0.30053,\ 2.3661]$
    \item Uniform mixture 2: $\delta \sim \text{Uniform}[1,\ 5/3]$.
\end{enumerate}
Here, $\text{DM}(a, b; p)$ denotes a two-point discrete distribution with $\Pr(\delta = a) = p$ and $\Pr(\delta = b) = 1 - p$. Discrete mixture 2 is designed to assess whether the EM test loses power when the mixing proportion is close to zero or one; the other five specifications follow \citet{chowhite10joe}. For the other parameters in \eqref{dgp_independent}, we set $(\bs{\beta}, \gamma, \lambda) = (1,1,1)$.

Tables \ref{table power independent discrete 1}--\ref{table power independent uniform 2} show the power properties of the tests. 
For the comparison of the EM test and the LRT, we focus on the size-unadjusted power for sample sizes of 500 or more, where the EM test controls the level. This comparison is the practically relevant one, because size adjustment requires knowledge of the data-generating process and is therefore infeasible in practice. In this comparison, the EM test generally performs better than the LRT, although the difference is small when the power of both tests is close to 100\%. The difference is also small under discrete mixture 2 and uniform mixture 2, where no test has appreciable power, and under discrete mixture 2 the LRT rejects slightly more often than the EM test at larger sample sizes, particularly with the wider sets $A$. The IM and LM$_2$ tests have lower power than the EM test in nearly all cases, while the high rejection rates of LM$_3$ reflect its large size distortion.

With size adjustment, the EM test and the LRT have higher power than the IM and LM tests in every design for $n \geq 500$. The size-adjusted comparison is the one more favorable to the LRT: because the EM test is already correctly sized at these sample sizes, the adjustment leaves its power essentially unchanged, whereas it raises the power of the conservative LRT substantially.
The power of the EM test for $K=2$ and $K=3$ is very similar to that for $K=1$; for brevity, Tables \ref{table power independent discrete 1}--\ref{table power independent uniform 2} report results only for $K=1$.

\subsection{Covariate-dependent censoring}\label{sec:MC_dependent}
We next examine the performance of the tests in the setting where the censoring process depends on covariates.
We specify the conditional density of $Y$ given $\bs{X}$ and $\delta$ and the censoring time $C$ as
\begin{align} 
	f(y | \bs{x}; \delta, \bs{\beta}, \gamma) &= \delta \gamma \exp(\bs{x}\t\bs{\beta})y^{\gamma - 1} \exp(- \delta \exp(\bs{x}\t\bs{\beta}) y^{\gamma}) \quad \text{for} \ y > 0, \notag \\
	C &= 2.5 \exp(-\eta_1 X_1^2 + \eta_2 X_2), \label{dgp dependent}
\end{align}
where $\bs{X} \coloneq (X_1, X_2)\t$, and $X_1$ and $X_2$ are mutually independent covariates following the chi-squared distribution with one degree of freedom and the standard normal distribution, respectively. Because $C$ is a deterministic function of the covariates, the design is covered by model \eqref{eq:model1} as the special case in which the conditional distribution of $C$ given $\bs{X}=\bs{x}$ is degenerate, so that $\nu_{\bs{x}}$ is counting measure on a single point and $g(\cdot|\bs{x})$ equals one there.
The specification of $\delta$, the observed data, and the implementation of the tests are the same as in the independent-censoring case, except that $\bs{X}_t$ is now two-dimensional.

Unlike the IM and LM tests, whose $\chi^2$ asymptotic null distributions follow from Assumptions \ref{assn1}--\ref{assn3} regardless of whether censoring depends on $\bs{X}_t$, the LRT's asymptotic null distribution has been formally established by \citet{chowhite10joe} only under fixed or random censoring.
Its validity under covariate-dependent censoring is, at present, an open question. We nonetheless compute the LRT's critical values using the same weighted bootstrap procedure as in the covariate-independent case above. The LRT results reported below therefore apply the procedure outside the setting covered by its asymptotic theory.

Table \ref{table level dependent} reports the empirical rejection rates of the tests under the null hypothesis.
We set $\delta = 1$, $\beta_1 = -0.3$, $\beta_2 = 1$, $\gamma = 1$, $\eta_1 = 0.5$, and $\eta_2 = 1$, so that about 42\% of the observations are censored.
The nominal significance level is fixed at 5\%.
The results are similar to those under independent censoring: the EM test has size close to 5\% for $n \geq 500$, the LRT is conservative, and the IM and LM tests over-reject.

To assess power, we consider the same six specifications for $\delta$ as in the independent-censoring setting. The value of $(\beta_1, \beta_2, \gamma, \eta_1, \eta_2)$ is the same as in the null setting.
Tables \ref{table power dependent discrete 1}--\ref{table power dependent uniform 2} show the power properties of the tests.
As in the independent-censoring case, the IM and LM$_2$ tests lag behind the EM test and the LRT in nearly all cases, and the higher rejection rates of LM$_3$ reflect its size distortion.
The comparison between the EM test and the LRT also follows the independent-censoring pattern: for $n \geq 500$, the EM test generally has higher size-unadjusted power than the LRT. As in the independent-censoring case, the power for $K=2$ and $K=3$ is very similar to that for $K=1$, so Tables \ref{table power dependent discrete 1}--\ref{table power dependent uniform 2} report results only for $K=1$.

\section{Real-world data analysis}\label{sec:realdata}
In this section, we compare the proposed EM test with the other tests by analyzing real-world data from the Stanford Heart Transplant Program.
The dataset is taken from \citet{kf2011book} and is also available as ``jasa'' in the R package \textbf{survival}. The dataset records the possibly censored survival times (in days) of 103 patients along with their censoring status, which takes the value one if the patient died (uncensored) and zero if the patient is censored.
The dataset also records patients' characteristics, of which we use three: the age (in years) at the time of acceptance, the year of acceptance into the program, and an indicator of prior heart surgery. These are the baseline covariates used by \citet{kf2011book} in their analysis of these data.
We do not use the patients' transplant status. A patient is eligible for a transplant from acceptance onward and joins the treatment group only when a donor heart becomes available, so transplant status is realized after acceptance. \citet{kf2011book} accordingly model it as a time-dependent covariate, which lies outside the model of Section \ref{sec:Weibull_model}.
If transplantation alters mortality, patients with the same covariates but different transplant histories have different survival distributions, which can by itself produce a rejection of the homogeneous Weibull model. 

Censoring in this dataset is administrative. Patients entered the program over a period of several years, and those still alive when follow-up ended were censored.
A patient's potential censoring time is thus the time from acceptance to the common end of follow-up, which varies with the date of acceptance.
Because 26 of the 28 censored patients share the same end-of-follow-up date, the correlation between the potential censoring time and the year of acceptance is $-0.988$.
The program also changed over this period: \citet{kf2011book} find the year of acceptance to be a significant predictor of survival in these data and attribute it to a relaxation of the admission requirements.
The tests require that $Y_t$ and $C_t$ be conditionally independent given $\bs{X}_t$. Because the potential censoring time is determined by the acceptance date, this amounts to requiring that survival be independent of the acceptance date given the covariates. We return to this requirement after introducing the covariate specifications. When the covariates include the year of acceptance, the application falls in the covariate-dependent censoring regime of Section \ref{sec:MC_dependent}, in which the EM test remains valid under Assumptions \ref{assn1}--\ref{assn3} while the asymptotic validity of the LRT has not been established; the LRT results reported below should be read with this caveat in mind.

Using this dataset, we test the specified homogeneous Weibull regression against a scale-mixture alternative.
As introduced in Section \ref{sec:Weibull_model}, the tests are based on the Weibull density function of the survival time
\begin{equation} 
	f(y | \bs{x}; \delta, \bs{\beta}, \gamma) = \delta \gamma \exp(\bs{x}\t\bs{\beta})y^{\gamma - 1} \exp(-\delta \exp(\bs{x}\t\bs{\beta})y^{\gamma}), \notag
\end{equation}
where, for the covariate vector $\bs{X}$, we consider the following five specifications
\begin{align}
	&\text{(I)} \ (X_{\mathrm{age}}), \quad
	\text{(II)} \ (X_{\mathrm{age}}, X_{\mathrm{age}}^2), \quad
	\text{(III)} \ (X_{\mathrm{age}}, X_{\mathrm{yr}}), \quad
	\text{(IV)} \ (X_{\mathrm{age}}, X_{\mathrm{age}}^2, X_{\mathrm{yr}}), \notag \\
	&\text{(V)} \ (X_{\mathrm{age}}, X_{\mathrm{age}}^2, X_{\mathrm{yr}}, X_{\mathrm{srg}}), \label{specification}
\end{align}
where $X_{\mathrm{age}}$ is the age at the time of acceptance, $X_{\mathrm{yr}}$ is the year of acceptance minus 1967, and $X_{\mathrm{srg}}$ indicates prior heart surgery. Specification (V) uses the full set of baseline covariates of \citet{kf2011book}. Only 16 of the 103 patients had prior surgery and only 9 of those 16 died, so the coefficient on $X_{\mathrm{srg}}$ is imprecisely estimated; \citet{kf2011book} report it as significant only at the 10\% level and caution that it rests on these 16 cases. For specifications (III)--(V), which include the year of acceptance, conditional independence of survival and censoring requires that the acceptance date within a year carry no further information about survival given the other covariates. We maintain this assumption. We also assume that the two patients censored before the common end of follow-up, for reasons the data do not record, were censored independently of their survival given the covariates. Specifications (I) and (II), by contrast, omit the year of acceptance. Because the year of acceptance predicts survival given age and also determines the censoring time, survival and censoring are unlikely to be conditionally independent given age alone: in a Weibull regression, the coefficient on the year of acceptance is significant at the 5\% level given age ($p = 0.04$) and at the 10\% level given age and its square ($p = 0.09$). We report specifications (I) and (II) for comparison, but they lie outside the maintained assumption, and their results should be interpreted with caution.

For the EM test, we use the same $\Pi$ and $p(\pi)$ as in Section \ref{sec:MC} and again consider $K = 1, 2, 3$. The other tests for comparison are the LRT, the IM test, and the LM tests, implemented using the same procedure as in Section \ref{sec:MC}.

Table \ref{real-world data analysis} summarizes the results of the analysis.
The upper panel reports $p$-values computed from the asymptotic null distribution of each test statistic.
Because the sample size of this dataset is $n = 103$ and Table \ref{table level dependent} shows that none of the tests has accurate size at $n = 100$, the lower panel reports $p$-values computed from the empirical distribution of each test statistic under the null setting with $n = 100$ in that simulation.
These simulation-based $p$-values are not a valid finite-sample correction for the present data. Although the asymptotic null distributions of the EM, IM, and LM test statistics do not depend on the data-generating process, their finite-sample distributions do, as the differences between Tables \ref{table level independent} and \ref{table level dependent} show. We therefore report the lower panel as a sensitivity analysis, which asks whether each test's conclusion survives the finite-sample null distribution observed in a related design.
We do not report simulation-based $p$-values for the LRT because, unlike those of the other tests, the asymptotic null distribution of the LRT statistic depends on the joint distribution of $(\bs{X}_t, C_t)$, and the critical values cannot be carried over from the simulation design in Section \ref{sec:MC_dependent} to this dataset.

In the upper panel, the EM test rejects the null hypothesis of homogeneity at the 1\% level in all five specifications.
The LRT fails to reject at the 5\% level for the narrowest admissible set $A$ in every specification.
Widening $A$ recovers rejection in specifications (I), (III), (IV), and (V), but in specification (II) all three choices of $A$ fail to reject.
In many cases, the LRT has a smaller $p$-value when $A$ is wide, which is consistent with Tables \ref{table power dependent discrete 1}--\ref{table power dependent uniform 2}, where the LRT has higher power with a wider $A$ than with a narrower one.
The EM test exhibits no comparable sensitivity to the choice of $K$.
The IM and LM$_2$ tests reject the null hypothesis in four of the five specifications, failing only in specification (II). LM$_3$ rejects in all five.

In the lower panel, the EM test continues to reject the null hypothesis at the 5\% level in all five specifications, and at the 1\% level in all specifications except (II), where its simulation-based $p$-value is $0.013$.
The IM and LM tests behave very differently.
Their simulation-based $p$-values are considerably larger than their asymptotic ones, and all three fail to reject the null hypothesis at the 5\% level except for specification (III).
This pattern is consistent with the substantial over-rejection of the IM and LM tests at $n = 100$ in Table \ref{table level dependent}, but the lower panel does not establish that their rejections in specifications (I), (IV), and (V), and that of LM$_3$ in (II), reflect size distortion for the present data. The IM and LM$_2$ tests have identical simulation-based $p$-values because $LM_{2,n} = IM_n/(1 - IM_n/n)$.

The same comparison is not available for the LRT. Table \ref{table level dependent} nonetheless indicates that the LRT is conservative at this sample size, consistent with its failure to reject the null in many cases.
Taken together, the two panels show that the EM test is the only test whose rejection of the null hypothesis of homogeneity survives both this sensitivity analysis and the choice of specification.
The EM test rejects at the 5\% level in every specification in both panels, whereas the conclusions of the IM and LM tests change between the two panels, and those of the LRT strongly depend on the choice of $A$ and of the specification.

\section{Conclusion}\label{sec:conclusion}

Unobserved heterogeneity poses a serious problem in duration analysis: ignoring it can bias the estimated structural parameters of the hazard function and invalidate inference. In censored duration models, the existing tests for unobserved heterogeneity offer an unsatisfactory remedy, as the IM and LM tests suffer from severe size distortion and the LRT requires a computationally demanding bootstrap for inference.

This paper develops an EM test for unobserved heterogeneity in censored Weibull duration models. The test builds on the EM approach of \citet{lcm09bm}, and its statistic converges in distribution to the square of $\max\{0, N(0,1)\}$ under the null hypothesis, so that critical values are obtained directly from the standard normal distribution without simulation or bootstrap. Unlike the LRT of \citet{chowhite10joe}, the EM test accommodates covariate-dependent censoring of arbitrary form without additional adjustment.

Monte Carlo simulations show that the EM test controls size well for sample sizes of 500 or more, under both covariate-independent and covariate-dependent censoring and for censoring fractions from 25\% to 75\%. Its size-adjusted power is comparable to that of the LRT and higher than that of the IM and LM tests. Size adjustment is not available in practice, however, and the LRT is markedly conservative in finite samples, so the EM test attains higher power than the LRT in most designs when the two are applied as a practitioner would apply them.

An application to survival data from the Stanford Heart Transplant Program illustrates the same pattern. The EM test rejects the null hypothesis of homogeneity in all five covariate specifications, whereas the LRT's conclusion depends on the choice of the admissible set in many cases. The rejections by the IM and LM tests, in contrast, generally do not survive when $p$-values are computed from the finite-sample null distributions in the simulations.

The test maintains the assumption that the duration and the censoring time are conditionally independent given the covariates. Under endogenous censoring, the duration distribution is not point identified without further structure. \citet{KhanTamer09joe}, \citet{Szydlowski19jae}, and \citet{Sakaguchi24jae} use instruments or report identified sets; extending the EM test in that direction is left for future work.

\section*{Funding}

This work was supported by JSPS KAKENHI Grant Number JP24K04814.

\section*{Declaration of generative AI and AI-assisted technologies in the manuscript preparation process}

During the preparation of this work the authors used Claude (Anthropic) and Codex (OpenAI) in order to proofread and copyedit the manuscript text, correct internal cross-references and notation, and check the simulation code for errors. After using these tools, the authors reviewed and edited the content as needed and take full responsibility for the content of the published article.

\bibliography{mixture_duration.bib}

\begin{appendices}

\numberwithin{equation}{section}

\section{Assumptions}\label{app:assumptions}

\begin{assumption} \label{assn1}
(i) $\{(Y_t, C_t, \bs{X}_t)\}$ is a strictly stationary geometric $\beta$-mixing process with $\beta$-mixing coefficients $\beta_\tau \leq c\rho^\tau$ for some $c>0$ and $\rho \in [0,1)$, where $Y_t$ and $C_t$ are $\mathbb{R}^+$-valued, $\bs{X}_t$ is $\mathbb{R}^k$-valued, and $\bs{X}_t$ does not contain a constant term. (ii) For $t=1,2,\ldots$, conditional on $\bs{X}_t$, $(Y_t,C_t)$ has the conditional density
\begin{align*}
m(y,c|\bs{X}_t; \pi^*,\delta_1^*,\delta_2^*,\bs{\beta}^*, \gamma^*)\coloneq & \{ \pi^* f ( y |\bs{X}_t; \delta_1^*, \bs{\beta}^*, \gamma^*) + ( 1-\pi^*) f ( y | \bs{X}_t; \delta_2^*,\bs{\beta}^*,\gamma^*) \} g(c|\bs{X}_t)
\end{align*}
for some $(\pi^*,\delta_1^*,\delta_2^*,\bs{\beta}^*,\gamma^*) \in [0,1] \times D \times D \times B \times \Gamma$, where $D \times D \times B \times \Gamma$ is a convex compact subset of $\mathbb{R}^+ \times \mathbb{R}^+ \times \mathbb{R}^d \times \mathbb{R}^+$;
\[
f (y |\bs{X}_t; \delta^*, \bs{\beta}^*, \gamma^*) = \delta^* \gamma^* \varphi(\bs{X}_t; \bs{\beta}^*)y^{\gamma^*-1}\exp(-\delta^* \varphi(\bs{X}_t;\bs{\beta}^*) y^{\gamma^*});
\]
for each $\bs{\beta} \in B$, $\varphi(\cdot;\bs{\beta}):\mathbb{R}^k \to \mathbb{R}^+$ is a Borel measurable function. Further, with $\mathcal{F}_{t-1} \coloneq \sigma(\{(Y_s, C_s, \bs{X}_s)\}_{s \leq t-1})$ denoting the history of the process, $m(\cdot,\cdot|\bs{X}_t; \pi^*,\delta_1^*,\delta_2^*,\bs{\beta}^*, \gamma^*)$ is the conditional probability density function of $(Y_t, C_t)$ given $\bs{X}_t$ and $\mathcal{F}_{t-1}$.
\end{assumption}

\begin{assumption} \label{assn2}
(i) $\varphi(\bs{X}_t; \cdot)$ is four times continuously differentiable almost surely. (ii) The maximizer
\[
(\delta^*, \bs{\beta}^*, \gamma^*) \coloneq \mathop{\arg\max}_{(\delta,\bs{\beta},\gamma) \in D \times B \times \Gamma} E[ \log f^c(Y_t^c, D_t|\bs{X}_t; \delta,\bs{\beta},\gamma)]
\]
exists, is unique, and lies in the interior of $D \times B \times \Gamma$. For each $(\pi,\bs{\theta}) \in [0, 1] \times \Theta$, $E [\ell_t (\pi, \bs{\theta})]$ exists and is finite, where
\[
\ell_t(\pi,\bs{\theta})\coloneq \log[\pi f^c(Y_t^c, D_t |\bs{X}_t;\delta_1,\bs{\beta},\gamma)+(1-\pi) f^c(Y_t^c, D_t|\bs{X}_t;\delta_2,\bs{\beta},\gamma)].
\]
(iii) For each $\pi \in (0,1)$, $E \sup_{\bs{\theta}} | \ell_t (\pi, \bs{\theta})| <\infty$. (iv) If $\delta_1^* = \delta_2^*$, then, for each $\pi \in (0,1)$, $\bs{\theta}^*$ uniquely maximizes $E [\ell_t (\pi, \bs{\theta})]$. (v) The matrix $E[\bs{s}_t\bs{s}_t\t]$ is positive definite, where
\[
\bs{s}_t \coloneq \left( \left\{ \frac{\nabla_{(\delta, \bs{\beta}\t,\gamma)\t} f^{c}_t}{f^{c}_t} \right\}\t, \ \frac{\nabla_{\delta\delta} f^{c}_t}{f^{c}_t} \right)\t, \qquad f^{c}_t \coloneq f^{c}(Y_t^c, D_t |\bs{X}_t;\delta^*,\bs{\beta}^*,\gamma^*).
\]
\end{assumption}

\begin{assumption} \label{assn3}
There exists a sequence of positive, strictly stationary, and ergodic random variables $\{M_t\}$ such that for some $\epsilon > 0$, 
\begin{assumptionitems}
\item $E[M_t^{1+\epsilon}] < \infty$;
\item $\sup_{(\pi,\bs{\theta})} \left| \nabla_{j} \ell_t (\pi, \bs{\theta}) \nabla_k \ell_t (\pi, \bs{\theta}) \right| \leq M_t$;
\item $\sup_{(\pi,\bs{\theta})} \left| \nabla_{j,k} \ell_t (\pi, \bs{\theta}) \right| \leq M_t$;
\item $\left| \nabla_{i_1} f^c(Y_t^c, D_t| \bs{X}_t;\delta^*,\bs{\beta}^*,\gamma^*)/f^c(Y_t^c, D_t| \bs{X}_t;\delta^*,\bs{\beta}^*,\gamma^*) \right|^4 \leq M_t$;
\item $\left| \nabla_{i_1}\nabla_{i_2} f^c(Y_t^c, D_t | \bs{X}_t;\delta^*,\bs{\beta}^*,\gamma^*)/f^c(Y_t^c, D_t | \bs{X}_t;\delta^*,\bs{\beta}^*,\gamma^*)\right|^2 \leq M_t$;
\item \label{assn_3.6} $\sup_{(\delta,\bs{\beta},\gamma)} \left| \nabla_{i_1}\nabla_{i_2} \nabla_{i_3} f^c(Y_t^c, D_t| \bs{X}_t;\delta,\bs{\beta},\gamma)/f^c(Y_t^c, D_t | \bs{X}_t;\delta,\bs{\beta},\gamma) \right|^2 \leq M_t$;
and
\item $\sup_{(\delta,\bs{\beta},\gamma)} \left| \nabla_{i_1}\nabla_{i_2} \nabla_{i_3} \nabla_{i_4} f^c(Y_t^c, D_t| \bs{X}_t;\delta,\bs{\beta},\gamma)/f^c(Y_t^c, D_t | \bs{X}_t;\delta,\bs{\beta},\gamma) \right| \leq M_t$,
\end{assumptionitems}
where $j, k \in \{ \pi,\delta_1,\delta_2, \beta_1,\ldots,\beta_d,\gamma\}$ and $i_1,\ldots,i_4 \in \{\delta,\beta_1,\ldots,\beta_d,\gamma\}$.
\end{assumption}

\section{Proofs}\label{app:proofs}

\begin{proof}[Proof of Proposition \ref{EM_stat}]

We first show $\bs{\theta}^{(1)}(\pi_0) \to_p \bs{\theta}^*$. For any $\pi_0 \in (0,1)$, we have \\$\sup_{\bs{\theta}} \left| n^{-1}PL_{n}(\pi_0, \bs{\theta}) - E [\ell_t (\pi_0, \bs{\theta})] \right| \to_p 0$ from Lemma A1 of \citet{chowhite07em} and $|p(\pi_0)|<\infty$ because, as discussed in the proof of Theorem 1 of \citet{chowhite10joe} (see p.\ 1 of their mathematical appendix), our Assumptions \ref{assn1}--\ref{assn3} are sufficient for Assumptions A1, A2(i, iii), A3, A4, and A5(i) of \citet{chowhite07em}. Then, under Assumptions \ref{assn1}--\ref{assn2}, $\bs{\theta}^{(1)}(\pi_0) \to_p \bs{\theta}^*$ follows from Theorem 2.1 of \citet{neweymcfadden94hdbk}, because $E [\ell_t (\pi_0, \bs{\theta})]$ is continuous in $\bs{\theta}$ by Assumption \ref{assn2}. The same argument applies when $\Theta$ is replaced by its intersection with a closed neighborhood of $\bs{\theta}^*$, because $\bs{\theta}^*$ also uniquely maximizes $E [\ell_t (\pi_0, \bs{\theta})]$ over that set, and likewise in the proof of Lemma \ref{pi_update}; the remaining steps use the maximization only in a neighborhood of $\bs{\theta}^*$.

We proceed to derive the asymptotic distribution of the EM test statistic. Observe that the derivatives of $\log f_{a}^{c}(y^{c},d | \bs{x}; \pi, \delta_1, \delta_2, \bs{\beta}, \gamma)$ with respect to $\delta_1$ and $\delta_2$ are linearly dependent when $\delta_1=\delta_2=\delta$
\begin{align*}
\left. \nabla_{\delta_1} \log f_{a}^{c}(y^{c},d | \bs{x}; \pi, \delta_1, \delta_2, \bs{\beta}, \gamma)\right|_{\delta_1=\delta_2=\delta} =& \frac{\pi\nabla_{\delta} f^{c}(y^{c},d | \bs{x};\delta,\bs{\beta},\gamma)}{f^{c}(y^{c},d | \bs{x};\delta,\bs{\beta},\gamma)}, \\
\left. \nabla_{\delta_2} \log f_{a}^{c}(y^{c},d | \bs{x}; \pi, \delta_1, \delta_2, \bs{\beta}, \gamma)\right|_{\delta_1=\delta_2=\delta} =& \frac{(1-\pi) \nabla_{\delta} f^{c}(y^{c},d | \bs{x};\delta,\bs{\beta},\gamma)}{f^{c}(y^{c},d | \bs{x};\delta,\bs{\beta},\gamma)}.
\end{align*}
As a result, the Fisher information matrix is singular, and $L_n(\pi,\delta_1,\delta_2,\bs{\beta},\gamma)$ cannot be analyzed by a standard second-order Taylor expansion.

We derive a quadratic approximation of $L_n(\pi,\delta_1,\delta_2,\bs{\beta},\gamma)$ by a quadratic form of polynomials of the parameter $\bs{\theta}$. We use a reparameterization that generalizes \citet{rotnitzky00bernoulli}, who derive the asymptotics of the LRT statistic of a scalar-parameter model (i.e., without $(\bs{\beta},\gamma)$) when the Fisher information matrix is singular. Consider the following one-to-one reparameterization
\begin{equation}
\left(
\begin{array}{c}
\lambda \\
\nu \\
\end{array}
\right)\coloneq 
\left(
\begin{array}{c}
\delta_1 - \delta_2 \\
\pi \delta_1 + (1-\pi) \delta_2\\
\end{array}
\right),
\quad \text{so that} \quad
\left(
\begin{array}{c}
\delta_1\\
\delta_2 \\
\end{array}
\right)=
\left(
\begin{array}{c}
\nu+ (1-\pi) \lambda\\
\nu- \pi\lambda\\
\end{array}
\right). \label{repara}
\end{equation}
Collect the reparameterized parameters except for $\pi$ into one vector as
\[
\bs{\psi} \coloneq (\lambda, \nu, \bs{\beta}\t, \gamma)\t\in \Theta_{\psi},
\]
where $\Theta_{\psi}\coloneq \{\bs{\psi}: \bs{\beta} \in B,\ \gamma \in \Gamma,\ \nu+(1-\pi)\lambda\in D \text{ and } \nu-\pi\lambda\in D\}$, and we suppress the dependence of $\bs{\psi}$ and $\Theta_\psi$ on $\pi$ for notational brevity.
In the reparameterized model, the null hypothesis of $\delta_1^* = \delta_2^*$ is written as $\lambda^*= 0$. We denote the true value of $\bs{\psi}$ by $\bs{\psi}^*=(0, \delta^*, (\bs{\beta}^*)\t, \gamma^*)\t$.

Let the reparameterized conditional density of $(Y_t^c, D_t)$ given $\bs{X}_t$ and its logarithm be
\begin{align}
f_{a}^{c}(Y_t^c, D_t|\bs{X}_t; \pi,\bs{\psi}) &\coloneq \pi f^c(Y_t^c, D_t |\bs{X}_t;\nu+(1-\pi) \lambda ,\bs{\beta},\gamma)+(1-\pi) f^c(Y_t^c, D_t|\bs{X}_t;\nu -\pi \lambda ,\bs{\beta},\gamma) , \nonumber \\
\ell_t( \pi,\bs{\psi}) &\coloneq \log f_{a}^{c}(Y_t^c, D_t|\bs{X}_t; \pi, \bs{\psi}). \label{loglike}
\end{align}
Collect the parameters in $\bs{\psi}$ except for $\lambda$ as
\[
\bs{\eta} \coloneq (\nu, \bs{\beta}\t, \gamma)\t,\ \text{ so that }\ \bs{\psi} = (\lambda, \bs{\eta}\t)\t. 
\]

Evaluated at the true parameter value, the first derivative of the reparameterized log-density \eqref{loglike} with respect to $\lambda$ becomes zero
\begin{equation} \label{dvarlambda}
\nabla_{\lambda} \ell_t(\pi,\bs{\psi}^*) = \frac{(1-\pi) \pi \nabla_\delta f^c(Y_t^c, D_t |\bs{X}_t;\delta^*,\bs{\beta}^*,\gamma^*) - \pi (1-\pi) \nabla_\delta f^c(Y_t^c, D_t |\bs{X}_t;\delta^*,\bs{\beta}^*,\gamma^*)}{f^c(Y_t^c, D_t |\bs{X}_t;\delta^*,\bs{\beta}^*,\gamma^*)}= 0. 
\end{equation}
On the other hand, the second derivative of $\ell_t(\pi,\bs{\psi})$ with respect to $\lambda$ under the true parameter value
\begin{equation} \label{dlambda2}
\nabla_{\lambda^2} \ell_t(\pi,\bs{\psi}^*) = \pi(1-\pi) \frac{\nabla_{\delta\delta} f^c(Y_t^c, D_t |\bs{X}_t;\delta^*,\bs{\beta}^*,\gamma^*)}{f^c(Y_t^c, D_t |\bs{X}_t;\delta^*,\bs{\beta}^*,\gamma^*)}
\end{equation}
is a mean-zero random variable and serves as the score with respect to $\lambda$. Further, the first derivative of $\ell_t(\pi,\bs{\psi})$ with respect to $\bs{\eta}$ under the true parameter value is a mean-zero non-degenerate random vector
\begin{equation} \label{dvartheta}
\nabla_{\bs{\eta}} \ell_t(\pi,\bs{\psi}^*) = \nabla_{(\delta, \bs{\beta}\t,\gamma)\t}f^c(Y_t^c, D_t |\bs{X}_t;\delta^*,\bs{\beta}^*,\gamma^*) / f^c(Y_t^c, D_t |\bs{X}_t;\delta^*,\bs{\beta}^*,\gamma^*). 
\end{equation}
Therefore, we can approximate $L_n(\pi,\delta_1,\delta_2,\bs{\beta},\gamma)$ by a function of the derivatives in \eqref{dlambda2}--\eqref{dvartheta} and $(\lambda^2, \bs{\eta})$.

Define the reparameterized log-likelihood function as $L_n(\pi, \bs{\psi}) \coloneq \sum_{t=1}^n \ell_t(\pi,\bs{\psi})$. Collect the relevant parameters and scores as
\begin{equation}\label{t-psi}
\bs{t}_{n}(\pi,\bs{\psi}) \coloneq 
\begin{pmatrix}
n^{1/2}(\bs{\eta} - \bs{\eta}^*)\\
n^{1/2}\pi(1-\pi)\lambda^2/2
\end{pmatrix}, \quad \bs{s}_t \coloneq 
\begin{pmatrix}
\nabla_{\bs{\eta}} \ell_t(\pi,\bs{\psi}^*) \\
\nabla_{\lambda^2} \ell_t(\pi,\bs{\psi}^*)/[\pi(1-\pi)]
\end{pmatrix}.
\end{equation}
Define 
\begin{equation}\label{Sn}
\bs{S}_n \coloneq n^{-1/2}\sum_{t=1}^n \bs{s}_t, \quad \bs{\mathcal{I}} \coloneq E[\bs{s}_t\bs{s}_t\t], \quad \bs{\mathcal{I}}_n \coloneq 
\begin{pmatrix}
\bs{\mathcal{I}}_{n \bs{\eta}} & \bs{\mathcal{I}}_{n\bs{\eta} \lambda}\\
\bs{\mathcal{I}}_{n\bs{\eta} \lambda}\t & \mathcal{I}_{n \lambda}
\end{pmatrix}, 
\end{equation}
where $\bs{\mathcal{I}}_{n \bs{\eta}} \coloneq -n^{-1}\nabla_{\bs{\eta} \bs{\eta}\t }L_n(\pi, \bs{\psi}^*)$, $\bs{\mathcal{I}}_{n \bs{\eta}\lambda} \coloneq -n^{-1}\nabla_{\bs{\eta} \lambda^2} L_n(\pi, \bs{\psi}^*)/ [\pi(1-\pi)]$, and \\$\mathcal{I}_{n\lambda} \coloneq -n^{-1}(1/3)\nabla_{\lambda^4 } L_n(\pi, \bs{\psi}^*)/[\pi(1-\pi)]^2$. 

We show the asymptotic distribution of $M_n^{1}(\pi_0)$. Let $\bs{\psi}^{(1)}(\pi_0)$ be the reparameterized parameter value that corresponds to $\bs{\theta}^{(1)}(\pi_0)$, and let $\bs{T}_{n} \coloneq \bs{\mathcal{I}}_{n}^{1/2}\bs{t}_{n}(\pi_0,\bs{\psi}^{(1)}(\pi_0))$. We first show $\bs{t}_{n}(\pi_0,\bs{\psi}^{(1)}(\pi_0))=O_p(1)$. The proof closely follows the proof of Theorem 1 of \citet{andrews99em}. From the definition of $\bs{\psi}^{(1)}(\pi_0)$ and Lemma \ref{P-quadratic}, we have
\begin{align*}
0 &\leq PL_n(\pi_0, \bs{\psi}^{(1)}(\pi_0)) - PL_n(\pi_0, \bs{\psi}^*)\\
 &= \bs{T}_{n}\t \bs{\mathcal{I}}_{n}^{-1/2} \bs{S}_{n} - \frac{1}{2} \|\bs{T}_{n}\|^2 + R_n(\pi_0, \bs{\psi}^{(1)}(\pi_0))\\
 &= O_p(\|\bs{T}_{n}\|) - \frac{1}{2} \|\bs{T}_{n}\|^2 + (1 + \| \bs{\mathcal{I}}_{n}^{-1/2} \bs{T}_{n}\|)^2 o_p(1)\\
 &= \|\bs{T}_{n}\|O_p(1) - \frac{1}{2} \|\bs{T}_{n}\|^2 + o_p(\|\bs{T}_{n}\|) + o_p(\|\bs{T}_{n}\|^2) + o_p(1),
\end{align*}
where the third equality holds because $\bs{\mathcal{I}}_{n}^{-1/2} \bs{S}_{n} = O_p(1)$ and, from Lemma \ref{P-quadratic} and the consistency of $\bs{\psi}^{(1)}(\pi_0)$, $R_n(\pi_0, \bs{\psi}^{(1)}(\pi_0)) = o_p((1 + \|\bs{\mathcal{I}}_{n}^{-1/2}\bs{T}_{n}\|)^2)$. Rearranging this inequality yields $\|\bs{T}_n\|^2 \leq 2 \|\bs{T}_n\| O_p(1)+o_p(1)$. Denote the $O_p(1)$ term by $\varsigma_{n}$. Then, $(\|\bs{T}_{n}\|-\varsigma_{n})^2\leq \varsigma_{n}^2 + o_p(1) = O_p(1)$; taking its square root gives $\|\bs{T}_{n}\| \leq O_p(1)$. In conjunction with $\bs{\mathcal{I}}_{n} \rightarrow_p \bs{\mathcal{I}}$, we obtain $\bs{t}_{n}(\pi_0,\bs{\psi}^{(1)}(\pi_0))= O_p(1)$.

From Lemma \ref{P-quadratic}, for any $\bs{\psi}$ with $\bs{t}_{n}(\pi_0,\bs{\psi})=O_p(1)$, we can write $2[PL_n(\pi_0, \bs{\psi}) - PL_n(\pi_0, \bs{\psi}^*)]$ as
\begin{equation} \label{LR_appn0}
2[PL_n(\pi_0, \bs{\psi}) - PL_n(\pi_0, \bs{\psi}^*)] =2\bs{t}_{n}(\pi_0, \bs{\psi})\t \bs{S}_n - \bs{t}_{n}(\pi_0, \bs{\psi})\t \bs{\mathcal{I}} \bs{t}_{n}(\pi_0, \bs{\psi})+ o_p(1).
\end{equation}
Split $\bs{t}_{n}(\pi_0,\bs{\psi})$, $\bs{S}_n$, and $\bs{\mathcal{I}}$ as
\[
\bs{t}_{n}(\pi_0,\bs{\psi}) =
\begin{pmatrix}
\bs{t}_{\bs{\eta} n}\\
t_{\lambda n}
\end{pmatrix}, \quad
\bs{S}_{n}=
\begin{pmatrix}
\bs{S}_{\bs{\eta} n}\\
S_{\lambda n}
\end{pmatrix}, \quad\bs{\mathcal{I}} =
\begin{pmatrix}
\bs{\mathcal{I}}_{\bs{\eta}} & \bs{\mathcal{I}}_{\bs{\eta} \lambda} \\
\bs{\mathcal{I}}_{\lambda \bs{\eta}} & {\mathcal{I}}_{\lambda} 
\end{pmatrix}, 
\]
and define 
\[
\bs{A}\coloneq 
\begin{pmatrix}
\bs{I} & 0 \\
- \bs{\mathcal{I}}_{\lambda \bs{\eta}}\bs{\mathcal{I}}_{\bs{\eta}}^{-1} & 1
\end{pmatrix}.
\]
Then, we can write the right-hand side of \eqref{LR_appn0} as 
\begin{align*}
& 2\bs{t}_{n}(\pi_0, \bs{\psi})\t \bs{A}^{-1}\bs{A}\bs{S}_n - \bs{t}_{n}(\pi_0, \bs{\psi})\t \bs{A}^{-1}\bs{A} \bs{\mathcal{I}} \bs{A}\t(\bs{A}^{-1})\t\bs{t}_{n}(\pi_0, \bs{\psi})+ o_p(1)\\
& = A_n(\pi_0,\bs{\psi}) + B_n(\pi_0,\bs{\psi}) + o_p(1),
\end{align*}
where
\begin{align*}
A_n(\pi_0,\bs{\psi}) &\coloneq 2\bs{t}_{\bs{\eta}.\lambda n}\t \bs{S}_{\bs{\eta}n} - \bs{t}_{\bs{\eta}.\lambda n}\t \bs{\mathcal{I}}_{\bs{\eta}} \bs{t}_{\bs{\eta}.\lambda n}, \qquad
B_n(\pi_0,\bs{\psi}) \coloneq 2 t_{\lambda n} S_{\lambda.\bs{\eta}n}- \mathcal{I}_{\lambda.\bs{\eta}}(t_{\lambda n})^2, \\
\bs{t}_{\bs{\eta}.\lambda n} &\coloneq \bs{t}_{\bs{\eta} n} + \bs{\mathcal{I}}_{\bs{\eta}}^{-1} \bs{\mathcal{I}}_{\bs{\eta} \lambda} t_{\lambda n}, \qquad
S_{\lambda.\bs{\eta}n} \coloneq S_{\lambda n} - \bs{\mathcal{I}}_{\lambda \bs{\eta}}\bs{\mathcal{I}}_{\bs{\eta}}^{-1} \bs{S}_{\bs{\eta}n}, \\
\mathcal{I}_{\lambda.\bs{\eta}} &\coloneq {\mathcal{I}}_{\lambda} - \bs{\mathcal{I}}_{\lambda \bs{\eta}}\bs{\mathcal{I}}_{\bs{\eta}}^{-1}\bs{\mathcal{I}}_{\bs{\eta} \lambda} = \text{Var}(S_{\lambda.\bs{\eta}n}).
\end{align*}
From \eqref{dvartheta}, $\bs{S}_{\bs{\eta}n}$ equals $n^{-1/2} \nabla_{(\delta, \bs{\beta}\t,\gamma)\t} L_{0n}(\delta^*,\bs{\beta}^*,\gamma^*)$. By \eqref{dlambda2} and \eqref{dvartheta}, $\bs{\mathcal{I}}$ is the matrix of Assumption \ref{assn2}(v), so $\bs{\mathcal{I}}_{\bs{\eta}}$ is invertible and $\mathcal{I}_{\lambda.\bs{\eta}}$ is strictly positive. Further, because $(\delta^*,\bs{\beta}^*,\gamma^*)$ lies in the interior of $D \times B \times \Gamma$ by Assumption \ref{assn2}(ii), the set of admissible values of $\bs{t}_{\bs{\eta}.{\lambda} n}$ approaches $\mathbb{R}^{d+2}$ and that of $t_{\lambda n}$ approaches $[0,\infty)$. Therefore, it follows from a standard argument that $A_n(\pi_0,\bs{\psi}^{(1)}(\pi_0)) = 2[L_{0n}(\widehat \delta, \widehat{\bs{\beta}}, \widehat \gamma) - L_{0n}(\delta^*,\bs{\beta}^*,\gamma^{*})]+ o_p(1)$. Maximizing $B_n(\pi_0,\bs{\psi})$ under the constraint $t_{\lambda n} \geq 0$ gives $\max_{t_{\lambda n} \geq 0}B_n(\pi_0,\bs{\psi}) = (\max\{S_{\lambda.\bs{\eta}n},0\})^2/\mathcal{I}_{\lambda.\bs{\eta}}$. Finally, noting that $PL_n(\pi_0, \bs{\psi}^*)=L_{0n}(\delta^*,\bs{\beta}^*,\gamma^*) + p(\pi_0)$, we obtain
\begin{equation} \label{LR_appn2}
M_n^{1}(\pi_0)=2[PL_n(\pi_0, \bs{\psi}^{(1)}(\pi_0)) - L_{0n}(\widehat \delta, \widehat{\bs{\beta}}, \widehat \gamma)] = (\max\{S_{\lambda.\bs{\eta}n},0\})^2/\mathcal{I}_{\lambda.\bs{\eta}} + 2 p(\pi_0) +o_p(1).
\end{equation}

We proceed to show $M_n^{K}(\pi_0)=M_n^{1}(\pi_0)+o_p(1)$; we suppress $(\pi_0)$ from $\pi^{(K)}(\pi_0)$ and $\bs{\theta}^{(K)}(\pi_0)$. Because $\bs{\theta}^{(1)}(\pi_0) - \bs{\theta}^{*} = o_p(1)$ and $\pi^{(1)}(\pi_0) - \pi_0 = 0$, it follows from Lemma \ref{pi_update} and induction that $\bs{\theta}^{(K)} - \bs{\theta}^{*} = o_p(1)$ and $\pi^{(K)} - \pi_0 = o_p(1)$ for all finite $K$; in particular, $\pi^{(K)} \in \Pi_\Delta$ with probability approaching one for $\Delta \coloneq \pi_0/2$. Because a generalized EM step never decreases the likelihood value \citep{dempster77jrssb}, we have $PL_{n}(\pi^{(K)}, \bs{\theta}^{(K)}) \geq PL_{n}(\pi_0, \bs{\theta}^{(1)}(\pi_0))$, so that $M_n^{K}(\pi_0) \geq M_n^{1}(\pi_0)$. Let $\widetilde{\bs{\theta}}$ be the maximizer of $PL_{n}(\pi^{(K)}, \bs{\theta})$ in an arbitrarily small closed neighborhood of $\bs{\theta}^{*}$, and let $\widetilde{\bs{\psi}}$ be the corresponding reparameterized value. For $a,b>0$ and $\pi\in\Pi_\Delta$, $|\partial_\pi\log\{\pi a+(1-\pi)b\}|\leq\Delta^{-1}$. Consequently, with probability approaching one,
\[
\sup_{\bs{\theta}\in\Theta}\left|n^{-1}L_n(\pi^{(K)},\bs{\theta})-n^{-1}L_n(\pi_0,\bs{\theta})\right| \leq\Delta^{-1}|\pi^{(K)}-\pi_0|=o_p(1).
\]
The uniform law at $\pi_0$ established at the beginning of this proof and Assumption \ref{assn2}(iv) therefore imply $\widetilde{\bs{\theta}}\to_p\bs{\theta}^*$. Continuity of the reparameterization then gives $\widetilde{\bs{\psi}}\to_p\bs{\psi}^*$. Because $\bs{\theta}^{(K)}$ lies in this neighborhood of $\bs{\theta}^*$ with probability approaching one, $PL_{n}(\pi^{(K)}, \widetilde{\bs{\theta}}) \geq PL_{n}(\pi^{(K)}, \bs{\theta}^{(K)})$ with probability approaching one. Because $PL_n(\pi^{(K)}, \widetilde{\bs{\psi}}) \geq PL_n(\pi^{(K)}, \bs{\psi}^*)$ and parts (a) and (c) of Lemma \ref{P-quadratic} hold uniformly in $\pi \in \Pi_\Delta$, the argument used above to show $\bs{t}_{n}(\pi_0,\bs{\psi}^{(1)}(\pi_0))= O_p(1)$ applies with $(\pi^{(K)}, \widetilde{\bs{\psi}})$ in place of $(\pi_0, \bs{\psi}^{(1)}(\pi_0))$ and gives $\bs{t}_{n}(\pi^{(K)},\widetilde{\bs{\psi}})= O_p(1)$. Consequently, Lemma \ref{P-quadratic} and the decomposition into $A_n$ and $B_n$ above give
\begin{align*}
2[PL_n(\pi^{(K)}, \widetilde{\bs{\psi}}) - PL_n(\pi^{(K)}, \bs{\psi}^*)] &\leq \sup_{\bs{t} \in \mathbb{R}^{d+2} \times [0,\infty)} \left\{ 2\bs{t}\t\bs{S}_n - \bs{t}\t\bs{\mathcal{I}}\bs{t} \right\} + o_p(1)\\
&= \bs{S}_{\bs{\eta}n}\t \bs{\mathcal{I}}_{\bs{\eta}}^{-1}\bs{S}_{\bs{\eta}n} + (\max\{S_{\lambda.\bs{\eta}n},0\})^2/\mathcal{I}_{\lambda.\bs{\eta}} + o_p(1),
\end{align*}
where the supremum does not depend on $\pi$ because neither $\bs{S}_n$ nor $\bs{\mathcal{I}}$ does. Since $PL_n(\pi,\bs{\psi}^*) = L_{0n}(\delta^*,\bs{\beta}^*,\gamma^*) + p(\pi)$, $2[L_{0n}(\widehat \delta, \widehat{\bs{\beta}}, \widehat \gamma) - L_{0n}(\delta^*,\bs{\beta}^*,\gamma^*)] = \bs{S}_{\bs{\eta}n}\t \bs{\mathcal{I}}_{\bs{\eta}}^{-1}\bs{S}_{\bs{\eta}n} + o_p(1)$, and $p(\pi^{(K)}) = p(\pi_0) + o_p(1)$ by the continuity of $p$ at $\pi_0$, it follows that, with probability approaching one,
\begin{align*}
M_n^{K}(\pi_0) &\leq 2[PL_{n}(\pi^{(K)}, \widetilde{\bs{\theta}}) - L_{0n}(\widehat \delta, \widehat{\bs{\beta}}, \widehat \gamma)]\\
&\leq (\max\{S_{\lambda.\bs{\eta}n},0\})^2/\mathcal{I}_{\lambda.\bs{\eta}} + 2 p(\pi_0) + o_p(1) = M_n^{1}(\pi_0) + o_p(1),
\end{align*}
where the last equality is \eqref{LR_appn2}. Therefore, $M_n^{K}(\pi_0)=M_n^{1}(\pi_0)+o_p(1)$ holds, and $\text{EM}_{n}^{K} \rightarrow_d (\max\{0,N(0,1)\})^2$ follows from the definition of $\text{EM}_n^{K}$ because $S_{\lambda.\bs{\eta}n}/\mathcal{I}_{\lambda.\bs{\eta}}^{1/2} \to_d N(0,1)$ and $0.5 \in \Pi$. 
\end{proof}

\section{Auxiliary results}\label{app:aux}

\begin{lemma} \label{P-quadratic}
Suppose that Assumptions \ref{assn1}--\ref{assn3} hold and $\delta_1^*=\delta_2^*$. Define $\bs{t}_{n}(\pi, \bs{\psi})$, $\bs{S}_n$, $\bs{\mathcal{I}}$, and $\bs{\mathcal{I}}_n$ as in \eqref{t-psi}--\eqref{Sn}. Then, for $\pi \in (0,1)$, we can write $L_n(\pi, \bs{\psi}) -L_n(\pi, \bs{\psi}^*)$ as the sum of a quadratic function of $\bs{t}_{n}(\pi, \bs{\psi})$ and a remainder term,
\begin{equation}\label{LR}
L_n(\pi, \bs{\psi}) -L_n(\pi, \bs{\psi}^*) = \bs{t}_{n}(\pi, \bs{\psi})\t \bs{S}_n - \frac{1}{2} \bs{t}_{n}(\pi, \bs{\psi})\t \bs{\mathcal{I}}_{n} \bs{t}_{n}(\pi, \bs{\psi})+ R_n(\pi, \bs{\psi}),
\end{equation}
where, for any $\Delta \in (0,1/2)$ and with $\Pi_\Delta \coloneq [\Delta, 1-\Delta]$, (a) for any $\xi >0$,
\[
\limsup_{n \rightarrow \infty} \Pr\left(\sup_{\pi \in \Pi_\Delta} \sup_{\bs{\psi} \in\Theta_{\psi}:\| \bs{\psi}-\bs{\psi}^* \| \leq \kappa } \frac{| R_n(\pi, \bs{\psi})|}{(1 + \| \bs{t}_n(\pi, \bs{\psi}) \|)^2} > \xi \right) \rightarrow 0 \quad \text{as } \kappa \rightarrow 0,
\]
(b) $\bs{S}_n \rightarrow_d \bs{S} \sim N(0,\bs{\mathcal{I}})$, and (c) $\sup_{\pi \in \Pi_\Delta} \|\bs{\mathcal{I}}_{n} - \bs{\mathcal{I}}\| \rightarrow_p 0$.
\end{lemma}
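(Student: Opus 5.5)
The plan is a fourth-order Taylor expansion of $\ell_t(\pi,\bs{\psi})$ in $\bs{\psi}$ around $\bs{\psi}^*$, carried out for fixed $\pi$. The expansion is then organized so that the leading terms are exactly the quadratic form in $\bs{t}_n(\pi,\bs{\psi})$, and every remaining term is shown to be $o_p((1+\|\bs{t}_n\|)^2)$ uniformly over $\pi\in\Pi_\Delta$ and shrinking neighborhoods of $\bs{\psi}^*$. This follows the template of \citet{rotnitzky00bernoulli} and Kasahara--Shimotsu for normal mixtures.

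\textbf{Step 1: derivatives at $\bs{\psi}^*$.} I first compute the derivatives of $\ell_t$ at $\bs{\psi}^*$ by writing $f_a^c$ in the reparameterization \eqref{repara}. The exact cancellations are as follows.
\begin{itemize}
\item $\nabla_\lambda\ell_t=0$, by \eqref{dvarlambda}.
\item $\nabla_{\lambda\bs{\eta}}\ell_t=0$, because $\pi(1-\pi)-(1-\pi)\pi=0$ inside every $\bs{\eta}$-derivative.
\item $\nabla_{\lambda^2}\ell_t=\pi(1-\pi)\nabla_{\delta\delta}f^c/f^c$, by \eqref{dlambda2}.
\item $\nabla_{\lambda^3}\ell_t$ is a multiple of $\pi(1-\pi)(1-2\pi)\nabla_{\delta^3}f^c/f^c$ plus products of lower-order terms.
\end{itemize}
With these, I collect the expansion of $L_n(\pi,\bs{\psi})-L_n(\pi,\bs{\psi}^*)$. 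The linear terms give $\sqrt n(\bs{\eta}-\bs{\eta}^*)\t\bs{S}_{\bs{\eta}n}+\sqrt n\,\pi(1-\pi)(\lambda^2/2)S_{\lambda n}$. The second-order terms in $\bs{\eta}$, the cross term $\lambda^2(\bs{\eta}-\bs{\eta}^*)$, and the $\lambda^4$ term give $-\frac12\bs{t}_n\t\bs{\mathcal{I}}_n\bs{t}_n$. This works because $-n^{-1}\nabla_{\lambda^4}L_n/24$ matches $\frac12\mathcal{I}_{n\lambda}[\pi(1-\pi)]^2\lambda^4/4$, which is precisely why $\mathcal{I}_{n\lambda}$ carries the factor $1/3$. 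Everything else goes into $R_n$.

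\textbf{Step 2: bounding the remainder.} The remaining terms are
\[
\lambda^3,\quad \lambda(\bs{\eta}-\bs{\eta}^*)^2,\quad \lambda^2(\bs{\eta}-\bs{\eta}^*)^2,\quad \lambda^3(\bs{\eta}-\bs{\eta}^*),\quad (\bs{\eta}-\bs{\eta}^*)^3,
\]
together with the fourth-order Lagrange remainder evaluated at intermediate points. Each coefficient is a normalized sum $n^{-1}\sum_t(\cdot)$ or $n^{-1/2}\sum_t(\cdot)$ of derivative ratios $\nabla^k f^c/f^c$.

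Each coefficient is handled by one of three arguments.
\begin{itemize}
\item \emph{Mean-zero coefficients.} Terms such as $\lambda^3$ times $n^{-1/2}\sum\nabla_{\lambda^3}\ell_t$ have a mean-zero summand. Its normalized sum is $O_p(1)$ by a CLT for geometrically $\beta$-mixing sequences, using the moment bounds in Assumption \ref{assn3}. Since $|\lambda^3|\sqrt n=|\lambda|\,O(\|\bs{t}_n\|)$, the term is $o_p(\|\bs{t}_n\|)$ as $\kappa\to0$.
\item \emph{Coefficients with nonzero mean.} Terms such as $\lambda(\bs{\eta}-\bs{\eta}^*)^2$ are bounded via an ergodic LLN by $O_p(1)\cdot n|\lambda|\,\|\bs{\eta}-\bs{\eta}^*\|^2=O_p(\kappa)\|\bs{t}_n\|^2$.
\item \emph{Intermediate-point terms.} The fourth-order remainder is controlled uniformly via the sup-bounds in Assumption \ref{assn3}(vi),(vii) and the dominating $M_t$, giving $O_p(\kappa)\|\bs{t}_n\|^2$.
\end{itemize}
Since $\pi$ enters only polynomially through the coefficients and $\Pi_\Delta$ is compact, uniformity in $\pi$ comes for free.

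I expect the main obstacle to be the precise bookkeeping in Step 2. In particular, mixed terms like $\lambda^2(\bs{\eta}-\bs{\eta}^*)$ with a nonzero-mean coefficient must be matched exactly to $\bs{\mathcal{I}}_{n\bs{\eta}\lambda}$, and all odd-power and cross terms must be shown to be dominated by $\|\bs{t}_n\|$ rather than by $\sqrt n|\lambda|$. Doing this requires expressing every $\lambda$ power through $\sqrt n\lambda^2$.

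\textbf{Parts (b) and (c).} For (b), $\bs{s}_t$ does not depend on $\pi$ and is a martingale difference with respect to $\mathcal{F}_{t-1}$: by Assumption \ref{assn1}(ii) the conditional density is $f^c$, and derivatives of densities integrate to zero. The martingale CLT with the fourth-moment bounds of Assumption \ref{assn3}(iv),(v) gives $\bs{S}_n\to_d N(0,\bs{\mathcal{I}})$, with $\bs{\mathcal{I}}$ positive definite by Assumption \ref{assn2}(v). For (c), each entry of $\bs{\mathcal{I}}_n$ is a polynomial in $\pi$ times sample averages of derivative ratios at $\bs{\psi}^*$. The ergodic theorem applies, and the information-type identities (for example, $E[\nabla_{\delta^4}f^c/f^c]=0$ and $-E\nabla_{\lambda^4}\ell_t/3=[\pi(1-\pi)]^2E[(\nabla_{\delta\delta}f^c/f^c)^2]$) show that the limits equal $\bs{\mathcal{I}}$. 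Uniformity over $\Pi_\Delta$ then follows from the polynomial dependence on $\pi$.
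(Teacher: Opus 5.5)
Your overall route is the paper's. It uses the fourth-order expansion in $(\lambda,\bs{\eta})$ and the cancellations $\nabla_\lambda L_n(\pi,\bs{\psi}^*)=\nabla_{\bs{\eta}\lambda}L_n(\pi,\bs{\psi}^*)=0$. It bounds the $\lambda^3$ coefficient by a CLT and the $\lambda\|\bs{\eta}-\bs{\eta}^*\|^2$ and $\|\bs{\eta}-\bs{\eta}^*\|^3$ coefficients by an LLN. Part (b) uses the martingale CLT, and part (c) uses the identity $\nabla_{\lambda^4}\ell_t=\nabla_{\lambda^4}f_a^c/f_a^c-3(\nabla_{\lambda^2}f_a^c/f_a^c)^2$ at $\bs{\psi}^*$.

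The step you label ``intermediate-point terms'' fails as stated. $\mathcal{I}_{n\lambda}$ is defined at $\bs{\psi}^*$, but the fourth-order Lagrange term is evaluated at $\bs{\psi}^\dagger$. The remainder therefore contains $\frac{1}{4!}\{\nabla_{\lambda^4}L_n(\pi,\bs{\psi}^\dagger)-\nabla_{\lambda^4}L_n(\pi,\bs{\psi}^*)\}\lambda^4$. Every other fourth-order monomial carries a spare factor of $\|\bs{\psi}-\bs{\psi}^*\|$, but this one does not: $n\lambda^4=4t_{\lambda n}^2/[\pi(1-\pi)]^2$ is exactly of order $\|\bs{t}_n\|^2$. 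The sup-bounds on fourth derivatives in Assumption \ref{assn3} only give $n^{-1}|\nabla_{\lambda^4}L_n(\pi,\bs{\psi}^\dagger)-\nabla_{\lambda^4}L_n(\pi,\bs{\psi}^*)|=O_p(1)$. So this term is $O_p(1)\|\bs{t}_n\|^2$, not $O_p(\kappa)\|\bs{t}_n\|^2$. A bound of order $\kappa$ would need Lipschitz control of the fourth derivative, i.e.\ dominated fifth derivatives, which the assumptions do not supply.

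The missing ingredient is stochastic equicontinuity of the fourth derivative, which the paper's \eqref{nabla_3}--\eqref{nabla_4} provide. These are a uniform law of large numbers for $n^{-1}\nabla^{(4)}L_n(\pi,\bs{\psi})$ over $\Pi_\Delta\times(\Theta_\psi\cap\mathcal{N})$ and continuity of $E\nabla^{(4)}\ell_t$. Both follow from continuity of the fourth derivative together with the dominating $M_t$. They bound the term by $\|\bs{t}_n\|^2[d(\bs{\psi}^\dagger)+o_p(1)]$ with $d(\bs{\psi}^\dagger)\to0$ as $\kappa\to0$, which is what part (a) needs.

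This also corrects your claim that uniformity in $\pi$ ``comes for free'' from polynomial dependence. That holds for derivatives at $\bs{\psi}^*$, where $\lambda=0$. At an intermediate point with $\lambda\neq0$, however, the component densities are evaluated at $\nu+(1-\pi)\lambda$ and $\nu-\pi\lambda$. The uniform law must therefore hold jointly in $(\pi,\bs{\psi})$. With this replacement, the rest of your Step 2 bookkeeping goes through, since every other remainder term does carry a factor of $\kappa$.
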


\begin{proof}

We first show that, for $i=1,\ldots,d+2$,
\begin{align}
& \nabla_{\bs{\eta} \lambda}L_n(\pi, \bs{\psi}^*)=0, \quad \nabla_{\lambda^3} L_n(\pi, \bs{\psi}^*)=O_p(n^{1/2}), \label{nabla_1} \\
& \nabla_{\bs{\eta}\bs{\eta}\t \lambda }L_n(\pi, \bs{\psi}^*) =O_p(n), \quad \nabla_{\bs{\eta}\bs{\eta}\t\eta_i}L_n(\pi, \bs{\psi}^*)=O_p(n), \label{nabla_2}
\end{align}
and that for a neighborhood $\mathcal{N}$ of $\bs{\psi}^*$, with $\nabla^{(k)}$ denoting the $k$th derivative with respect to $\bs{\psi}$,
\begin{align}
& \sup_{\bs{\psi}\in\Theta_{\psi} \cap \mathcal{N}} \left\| n^{-1}\nabla^{(4)} L_n(\pi, \bs{\psi})- E\nabla^{(4)} \log f_{a}^{c}(Y_t^c, D_t|\bs{X}_t; \pi,\bs{\psi}) \right\|_\infty = o_p(1), \label{nabla_3}\\
& E\nabla^{(4)} \log f_{a}^{c}(Y_t^c, D_t|\bs{X}_t; \pi,\bs{\psi}) \text{ is continuous in }\bs{\psi} \in \Theta_\psi \cap \mathcal{N}. \label{nabla_4}
\end{align}
The first part of \eqref{nabla_1} follows from simple algebra. The second part of \eqref{nabla_1} follows from Assumption \ref{assn1} and $\nabla_{\lambda^3}\ell_t(\pi,\bs{\psi}^*) = \nabla_{\lambda^3} f_{a}^{c}(Y_t^c, D_t|\bs{X}_t; \pi,\bs{\psi}^*)/f_{a}^{c}(Y_t^c, D_t|\bs{X}_t; \pi,\bs{\psi}^*)$, which holds because $\nabla_\lambda f_{a}^{c}(Y_t^c, D_t|\bs{X}_t; \pi,\bs{\psi}^*)=0$. \eqref{nabla_2} follows from Assumption \ref{assn3} and the ergodic theorem. \eqref{nabla_3} is proven similarly to Lemma A1(b) of \citet{chowhite07em} using our Assumptions \ref{assn1}--\ref{assn3} in place of their Assumptions A1, A2(i, ii), A3, and A5(i). \eqref{nabla_4} follows from Assumptions \ref{assn2}--\ref{assn3}.

Expanding $L_n(\pi, \bs{\psi})$ four times around $\bs{\psi}^*$, noting that $\nabla_\lambda L_n(\pi, \bs{\psi}^*)=0$ and $\nabla_{\bs{\eta} \lambda}L_n(\pi, \bs{\psi}^*)=0$, and collecting the terms give
\begin{equation} \label{LR0}
\begin{aligned}
L_n(\pi, \bs{\psi}) -L_n(\pi, \bs{\psi}^*) =& \nabla_{\bs{\eta}}L_n(\pi, \bs{\psi}^*)(\bs{\eta} - \bs{\eta}^*) + \frac{1}{2!}(\bs{\eta} - \bs{\eta}^*)\t \nabla_{\bs{\eta} \bs{\eta}\t}L_n(\pi, \bs{\psi}^*)(\bs{\eta} - \bs{\eta}^*) \\
& + \frac{1}{2!} \nabla_{\lambda^2} L_n(\pi, \bs{\psi}^*) \lambda^2 + \frac{3}{3!} (\bs{\eta}-\bs{\eta}^*)\t \nabla_{\bs{\eta}\lambda^2} L_n(\pi, \bs{\psi}^*) \lambda^2 \\
& + \frac{1}{4!} \nabla_{\lambda^4} L_n(\pi, \bs{\psi}^*) \lambda^4 + R_n(\pi, \bs{\psi}),
\end{aligned}
\end{equation}
where, in view of \eqref{nabla_1}--\eqref{nabla_3},
\begin{align}
\lefteqn{ R_n(\pi, \bs{\psi}) = O_p(n^{1/2}) |\lambda|^3 + O_p(n) \left( \|\bs{\eta}-\bs{\eta}^*\|^2 |\lambda| + \|\bs{\eta}-\bs{\eta}^*\|^3 \right)} \label{Rn_2} \\
& +O_p(n) \left( \|\bs{\eta}-\bs{\eta}^*\|^4 + \|\bs{\eta}-\bs{\eta}^*\|^3 |\lambda|+ \|\bs{\eta}-\bs{\eta}^*\|^2 \lambda^2 + \|\bs{\eta}-\bs{\eta}^*\| |\lambda|^3 \right) \qquad \label{Rn_3} \\
& + \frac{1}{4!} \{\nabla_{\lambda^4} L_n(\pi,\bs{\psi}^\dagger) - \nabla_{\lambda^4} L_n(\pi, \bs{\psi}^*) \} \lambda^4, \label{Rn_4}
\end{align}
with $\bs{\psi}^\dag$ being between $\bs{\psi}$ and $\bs{\psi}^*$.

The terms on the right-hand side of \eqref{LR0} can be collected as in the right-hand side of \eqref{LR}. We proceed to derive the order of $R_n(\pi, \bs{\psi})$. Because $\|\bs{t}_n(\pi, \bs{\psi})\|^2 = n\|\bs{\eta} - \bs{\eta}^*\|^2 + n \pi^2(1-\pi)^2 \lambda^4/4$, the right-hand side of \eqref{Rn_2} and the terms in \eqref{Rn_3} are bounded by $O_p(1)(\|\bs{t}_n(\pi, \bs{\psi})\|+\|\bs{t}_n(\pi, \bs{\psi})\|^2)(\|\bs{\eta}-\bs{\eta}^*\|+|\lambda|)$. In view of \eqref{nabla_3} and \eqref{nabla_4}, \eqref{Rn_4} is bounded by $\|\bs{t}_n(\pi, \bs{\psi})\|^2[d(\bs{\psi}^\dagger)+o_p(1)]$ with $d(\bs{\psi}^\dagger)\rightarrow 0$ as $\bs{\psi}^\dagger \rightarrow \bs{\psi}^*$, where $d(\bs{\psi}^\dagger)$ corresponds to $n^{-1}E[\nabla_{\lambda^4}L_n(\pi, \bs{\psi}^\dagger) - \nabla_{\lambda^4}L_n(\pi, \bs{\psi}^*)]$. Therefore, $R_n(\pi, \bs{\psi})=(1+\|\bs{t}_n(\pi, \bs{\psi})\|)^2[d(\bs{\psi}^\dagger)+o_p(1) + O_p(\|\bs{\psi}-\bs{\psi}^*\|)]$, where the $o_p(1)$ and $O_p(\cdot)$ terms are uniform in $\pi \in \Pi_\Delta$ because \eqref{nabla_1}--\eqref{nabla_4} hold uniformly in $\pi \in \Pi_\Delta$. To see this, note from \eqref{repara} that $\pi$ enters $f_a^c$ only through the weights $\pi$ and $1-\pi$ and the shifts $(1-\pi)\lambda$ and $-\pi\lambda$. Hence each derivative of $L_n(\pi,\bs{\psi}^*)$ in \eqref{nabla_1}--\eqref{nabla_2} is a finite sum of polynomials in $\pi$ multiplied by sample averages that do not depend on $\pi$, and \eqref{nabla_3}--\eqref{nabla_4} extend to $(\pi,\bs{\psi}) \in \Pi_\Delta \times (\Theta_\psi \cap \mathcal{N})$ because the dominance conditions of Assumption \ref{assn3} are uniform in $\pi$. Part (a) follows.

For part (b), $\bs{s}_t$ is measurable with respect to $\mathcal{F}_t$, defined in Assumption \ref{assn1}(ii). By Assumption \ref{assn1}(ii), the conditional density of $(Y_t,C_t)$ given $\bs{X}_t$ and $\mathcal{F}_{t-1}$ equals $m(\cdot,\cdot|\bs{X}_t; \pi^*,\delta_1^*,\delta_2^*,\bs{\beta}^*,\gamma^*)$, and hence $E[\bs{s}_t|\bs{X}_t, \mathcal{F}_{t-1}] = E[\bs{s}_t|\bs{X}_t]$. Because $\nabla_\lambda f_{a}^{c}(Y_t^c, D_t|\bs{X}_t; \pi,\bs{\psi}^*)=0$, we have $\nabla_{\lambda^2}\ell_t(\pi,\bs{\psi}^*) = \nabla_{\lambda^2} f_{a}^{c}(Y_t^c, D_t|\bs{X}_t; \pi,\bs{\psi}^*)/f_{a}^{c}(Y_t^c, D_t|\bs{X}_t; \pi,\bs{\psi}^*)$, and the dominance conditions of Assumption \ref{assn3} imply $E[\nabla_{\lambda^2}\ell_t(\pi,\bs{\psi}^*)|\bs{X}_t] =0$, and $E[\nabla_{\bs{\eta}}\ell_t(\pi,\bs{\psi}^*)|\bs{X}_t]=0$ follows by the same argument. Therefore $E[\bs{s}_t|\bs{X}_t, \mathcal{F}_{t-1}]=0$, and the law of iterated expectations gives $E[\bs{s}_t|\mathcal{F}_{t-1}]=0$, so that $\{\bs{s}_t,\mathcal{F}_t\}$ is a strictly stationary and ergodic martingale difference sequence, ergodicity following from the $\beta$-mixing condition in Assumption \ref{assn1}(i). Assumption \ref{assn3} gives $E[\|\bs{s}_t\|^2]<\infty$. Part (b) then follows from the Cram\'er--Wold device and the martingale central limit theorem \citep[][Theorem 25.3]{davidson21book}.

For part (c), $\bs{\mathcal{I}}_{n \bs{\eta}} \rightarrow_p \bs{\mathcal{I}}_{\bs{\eta}}$ holds trivially. For $\bs{\mathcal{I}}_{n \bs{\eta}\lambda}$ and $\mathcal{I}_{n\lambda}$, Fa\`a di Bruno's formula and $\nabla_{\lambda} f_{a}^{c}(Y_t^c, D_t|\bs{X}_t; \pi,\bs{\psi}^*)=0$ give
\begin{align*}
\nabla_{\bs{\eta}\lambda^2}\ell_t(\pi,\bs{\psi}^*) =& \frac{\nabla_{\bs{\eta}\lambda^2}f_{a}^{c}(Y_t^c, D_t|\bs{X}_t; \pi,\bs{\psi}^*)}{f_{a}^{c}(Y_t^c, D_t|\bs{X}_t; \pi,\bs{\psi}^*)} - \frac{\nabla_{\bs{\eta}} f_{a}^{c}(Y_t^c, D_t|\bs{X}_t; \pi,\bs{\psi}^*)}{f_{a}^{c}(Y_t^c, D_t|\bs{X}_t; \pi,\bs{\psi}^*)}\frac{\nabla_{\lambda^2} f_{a}^{c}(Y_t^c, D_t|\bs{X}_t; \pi,\bs{\psi}^*)}{f_{a}^{c}(Y_t^c, D_t|\bs{X}_t; \pi,\bs{\psi}^*)} ,\\
\nabla_{\lambda^4}\ell_t(\pi,\bs{\psi}^*) =& \frac{\nabla_{\lambda^4} f_{a}^{c}(Y_t^c, D_t|\bs{X}_t; \pi,\bs{\psi}^*)}{f_{a}^{c}(Y_t^c, D_t|\bs{X}_t; \pi,\bs{\psi}^*)} - 3 \left(\frac{\nabla_{\lambda^2} f_{a}^{c}(Y_t^c, D_t|\bs{X}_t; \pi,\bs{\psi}^*)}{f_{a}^{c}(Y_t^c, D_t|\bs{X}_t; \pi,\bs{\psi}^*)}\right)^2 .
\end{align*}
Therefore, $\bs{\mathcal{I}}_{n \bs{\eta}\lambda} \to_p E[\nabla_{\bs{\eta}} \ell_t(\pi,\bs{\psi}^*)\nabla_{\lambda^2} \ell_t(\pi,\bs{\psi}^*)]/[\pi(1-\pi)]$ and $\mathcal{I}_{n \lambda} \rightarrow_p E[\{\nabla_{\lambda^2} \ell_t(\pi,\bs{\psi}^*)\}^2]/[\pi(1-\pi)]^2$, and $\bs{\mathcal{I}}_n \rightarrow_p \bs{\mathcal{I}}$ follows. The convergence is uniform in $\pi \in \Pi_\Delta$ by the argument in the proof of part (a): each element of $\bs{\mathcal{I}}_n$ is a finite sum of continuous functions of $\pi$, bounded on $\Pi_\Delta$, multiplied by sample averages that do not depend on $\pi$. 
\end{proof}

\begin{lemma} \label{pi_update}
Suppose that the assumptions of Proposition \ref{EM_stat} hold. If $\pi^{(k)}(\pi_0) - \pi_0 = o_p(1)$ and $\bs{\theta}^{(k)}(\pi_0) - \bs{\theta}^{*} = o_p(1)$, then $\pi^{(k+1)}(\pi_0) - \pi_0 = o_p(1)$ and $\bs{\theta}^{(k+1)}(\pi_0) - \bs{\theta}^{*} = o_p(1)$.
\end{lemma}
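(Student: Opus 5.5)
The claim has two parts, and I would handle them in order: first the $\pi$ update, then the $\bs{\theta}$ update. In both, the weights $w_{1t}^{(k)}$ built from $(\pi^{(k)},\bs{\theta}^{(k)})$ are compared with the infeasible weights at $(\pi_0,\bs{\theta}^*)$.

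\textbf{Step 1: the $\pi$ update.} The key fact is that under $\delta_1^*=\delta_2^*$ the two components coincide. So at $\bs{\theta}^*$ the weight is exactly $w_{1t}=\pi$ for every $t$.

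The weight is a smooth function of $(\pi,\delta_1,\delta_2,\bs{\beta},\gamma)$. On a neighborhood of $\bs{\theta}^*$ with $\pi\in\Pi_\Delta$, its derivatives are bounded by $\Delta^{-1}$ times $|\nabla_\delta f^c/f^c|$-type envelopes. These envelopes are integrable by Assumption \ref{assn3}, via a mean-value argument with a uniform law on a shrinking neighborhood. Hence
\[
n^{-1}\sum_{t=1}^n w_{1t}^{(k)}=\pi^{(k)}+o_p(1)=\pi_0+o_p(1).
\]

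The $\pi$-objective divided by $n$ is
\[
\bar w\log\pi+(1-\bar w)\log(1-\pi)+p(\pi)/n .
\]
Here $p(\pi)/n\to0$ uniformly on compact subsets of $(0,1)$, and $p\le 0$ with $p=-\infty$ at the endpoints. The limit criterion $\pi_0\log\pi+(1-\pi_0)\log(1-\pi)$ is strictly concave and uniquely maximized at $\pi_0$. A standard argmax argument (Newey--McFadden Theorem 2.1) then gives $\pi^{(k+1)}\to_p\pi_0$. The maximizer stays away from $0$ and $1$ because the criterion tends to $-\infty$ there.

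\textbf{Step 2: the $\bs{\theta}$ update.} The normalized M-step objective is
\[
Q_n(\bs{\theta})\coloneq n^{-1}\sum_{t=1}^n\sum_{j=1}^2 w_{jt}^{(k)}\log f^c(Y_t^c,D_t|\bs{X}_t;\delta_j,\bs{\beta},\gamma).
\]
I would compare it with
\[
Q(\bs{\theta})\coloneq \pi_0E\log f^c(\cdot;\delta_1,\bs{\beta},\gamma)+(1-\pi_0)E\log f^c(\cdot;\delta_2,\bs{\beta},\gamma).
\]
The bound is
\[
|Q_n-Q|\le n^{-1}\sum_t\sum_j|w_{jt}^{(k)}-w_{jt}^{*}|\,\sup_{\bs{\theta}}|\log f^c_{tj}|+\sup_{\bs{\theta}}|\text{ULLN term}|,
\]
with $w^*_{1t}=\pi_0$.

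The ULLN term is $o_p(1)$ by Lemma A1 of \citet{chowhite07em}, as in the proof of Proposition \ref{EM_stat}. The first term needs an envelope for $\sup_{\bs{\theta}}|\log f^c|$. I would take it from Assumption \ref{assn2}(iii) applied with $\delta_1=\delta_2$, so that $\ell_t=\log f^c$. Combined with $\max_t|w_{jt}^{(k)}-w^*_{jt}|$ being controlled through the mean-value bound, Hölder's inequality then gives $o_p(1)$.

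By the information inequality, $\log f^c(\cdot;\delta,\bs{\beta},\gamma)$ in expectation is uniquely maximized at $(\delta^*,\bs{\beta}^*,\gamma^*)$ (Assumption \ref{assn2}(ii)). Since $\pi_0\in(0,1)$, $Q$ is uniquely maximized at $\bs{\theta}^*$, and $Q$ is continuous. Consistency $\bs{\theta}^{(k+1)}\to_p\bs{\theta}^*$ follows.

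\textbf{Main obstacle.} The hard part is the cross term coupling data-dependent weights with a supremum of log densities. The weights are not uniformly close to $w^*$ in a pointwise-deterministic sense. I would first try to bound $|w_{1t}^{(k)}-\pi^{(k)}|\le \Delta^{-1}\|\bs{\theta}^{(k)}-\bs{\theta}^*\|M_t^{1/2}$, using a mean-value expansion in $\delta_1-\delta_2$ and $\bs{\beta},\gamma$ with Assumption \ref{assn3}. Multiplying by the integrable envelope would then require a product moment, obtained by Cauchy--Schwarz. If that moment is unavailable, a fallback is to restrict to a compact neighborhood as the paper does and use truncation on $M_t$.
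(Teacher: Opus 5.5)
Your proposal follows the paper's proof: show $n^{-1}\sum_t w_{1t}^{(k)}=\pi_0+o_p(1)$, confine $\pi^{(k+1)}$ to some $\Pi_\Delta$ using the blow-up at the endpoints, and apply Theorem 2.1 of \citet{neweymcfadden94hdbk}; then establish $\sup_{\bs{\theta}}|Q_n(\bs{\theta})-Q(\bs{\theta})|=o_p(1)$ and use that $Q$ is uniquely maximized at $\bs{\theta}^*$. You treat the cross term more explicitly than the paper's ``similar argument,'' but two of your steps are not supported by the stated assumptions: $\max_t|w_{jt}^{(k)}-w_{jt}^*|=o_p(1)$ fails because $\bs{\theta}^{(k)}$ has no rate, and Cauchy--Schwarz would need a second moment of $\sup_{\bs{\theta}}|\log f^c|$, which is not assumed. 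The argument is instead closed by your truncation fallback: bound $|w_{jt}^{(k)}-w_{jt}^*|\le 1$ on $\{M_t>B\}$, note that $E[\sup_{\bs{\theta}}|\log f^c|\,\mathds{1}\{M_t>B\}]\to 0$ as $B\to\infty$, and use the mean-value bound on $\{M_t\le B\}$.
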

\begin{proof}
We suppress $(\pi_0)$ from $\pi^{(k)}(\pi_0)$, $\bs{\theta}^{(k)}(\pi_0)$, and so on. The proof is similar to the proof of Lemma 3 of \citet{lichen10jasa}. Let $f_t(\delta,\bs{\beta},\gamma)$ and $f_t(\pi,\bs{\theta})$ denote $f^{c}(Y_t^{c},D_t | \bs{X}_t;\delta,\bs{\beta},\gamma)$ and $f_{a}^{c}(Y_t^{c},D_t | \bs{X}_t; \pi, \delta_{1}, \delta_{2}, \bs{\beta}, \gamma)$, respectively. Recall that $\pi^{(k+1)}$ maximizes
\[
Q_n(\pi) \coloneq \sum_{t=1}^n w_{1t}^{(k)} \log(\pi) + \sum_{t=1}^n \left(1-w_{1t}^{(k)}\right) \log (1-\pi) + p(\pi).
\]
It follows from a Taylor expansion, $\bs{\theta}^{(k)} - \bs{\theta}^{*} = o_p(1)$, $\pi^{(k)} - \pi_0 = o_p(1)$, and the law of large numbers that
\begin{equation}\label{pi_up1}
\frac{1}{n} \sum_{t=1}^n w_{1t}^{(k)} = \frac{1}{n} \sum_{t = 1}^n \frac{\pi^{(k)} f_t(\delta_1^{(k)},\bs{\beta}^{(k)},\gamma^{(k)})}{f_t(\pi^{(k)}, \bs{\theta}^{(k)})}
 = \frac{1}{n} \sum_{t = 1}^n \frac{\pi_0 f_t(\delta^*,\bs{\beta}^{*},\gamma^{*})}{f_t(\pi_0, \bs{\theta}^{*})} + o_p(1)
= \pi_0 + o_p(1).
\end{equation}
Consequently, we have $Q_n(\pi) = n(\pi_0 + o_p(1)) \log(\pi) + n(1-\pi_0 + o_p(1)) \log (1-\pi) + p(\pi)$. Because $p(\pi) \to -\infty$ as $\pi \to 0, 1$ and $\log(\pi)+\log (1-\pi) \to -\infty $ as $\pi \to 0,1$, there exists $\Delta \in (0,1)$ such that $\pi^{(k+1)} \in \Pi_\Delta\coloneq [\Delta, 1-\Delta]$ with probability approaching one as $n \to \infty$. Further, $n^{-1}Q_n(\pi) \to_p Q(\pi)\coloneq \pi_0 \log (\pi) + (1-\pi_0) \log (1-\pi)$ uniformly in $\pi \in \Pi_\Delta$. Therefore, $\pi^{(k+1)} - \pi_0 = o_p(1)$ follows from Theorem 2.1 of \citet{neweymcfadden94hdbk}.

We proceed to show $\bs{\theta}^{(k+1)} - \bs{\theta}^{*} = o_p(1)$. Note that $\bs{\theta}^{(k+1)}$ maximizes \\$Q_n(\bs{\theta})\coloneq n^{-1}\sum_{t=1}^n \sum_{j=1}^{2} w_{jt}^{(k)} \log f^c(Y_t^{c},D_t| \bs{X}_t; \delta_j, \bs{\beta}, \gamma)$. By an argument similar to that for \eqref{pi_up1}, we have $\sup_{\bs{\theta} \in \Theta}|Q_n(\bs{\theta})-Q(\bs{\theta})|=o_p(1)$, where $ Q(\bs{\theta})\coloneq E[\pi_0 \log f_t(\delta_1,\bs{\beta},\gamma) + (1-\pi_0) \log f_t(\delta_2,\bs{\beta},\gamma)]$. Then, $\bs{\theta}^{(k+1)} - \bs{\theta}^{*} = o_p(1)$ follows from Theorem 2.1 of \citet{neweymcfadden94hdbk}.
\end{proof}

\end{appendices}

\singlespacing

\clearpage
\section*{Tables}

\begin{table}[H]
\centering
\caption{Covariate-independent censoring: empirical rejection rates of tests under the null}
\label{table level independent}
\small
\begin{tabular}{lrrrrrr}
\toprule
$n$ & 50 & 100 & 500 & 1000 & 2000 & 5000 \\
\midrule
EM ($K=1$) & 11.04 & 7.54 & 5.14 & 5.62 & 5.04 & 4.52 \\
EM ($K=2$) & 11.74 & 7.66 & 5.14 & 5.62 & 5.04 & 4.52 \\
EM ($K=3$) & 11.92 & 7.68 & 5.14 & 5.62 & 5.04 & 4.52 \\
LRT ($A = [7/9, 2]$) & 0.00 & 0.00 & 0.42 & 1.86 & 3.02 & 3.66 \\
LRT ($A = [2/3, 3]$) & 0.00 & 0.16 & 2.22 & 3.90 & 4.10 & 4.24 \\
LRT ($A = [5/9, 4]$) & 0.36 & 0.98 & 2.86 & 4.50 & 4.48 & 4.44 \\
IM & 16.18 & 12.40 & 8.34 & 8.02 & 6.50 & 5.90 \\
LM$_2$ & 17.92 & 13.04 & 8.52 & 8.08 & 6.52 & 5.92 \\
LM$_3$ & 34.72 & 29.66 & 21.02 & 19.10 & 16.18 & 12.88 \\
\bottomrule
\end{tabular}
\begin{flushleft}
\small
Notes: Each entry reports the empirical rejection rate of the corresponding test under the null hypothesis. The nominal level is 5\%. The weighted bootstrap sample size for the LRT is 500. The number of Monte Carlo replications is 5{,}000.
\end{flushleft}
\end{table}

\begin{table}[H]
\centering
\caption{Covariate-independent censoring: empirical rejection rates of the EM test under the null for different censoring fractions and covariate distributions}
\label{table level robust}
\small
\begin{tabular}{llrrrrrr}
\toprule
Covariate & Censored & $\lambda$ & \multicolumn{5}{c}{$n$} \\
\cmidrule(l){4-8}
 & & & 50 & 100 & 500 & 1000 & 2000 \\
\midrule
$N(0,1)$ & 25\% & 0.268 & 10.70 & 7.78 & 4.90 & 4.90 & 4.64 \\
$N(0,1)$ & 50\% & 1 & 11.04 & 7.54 & 5.14 & 5.62 & 5.04 \\
$N(0,1)$ & 75\% & 3.725 & 12.20 & 7.74 & 4.90 & 5.60 & 5.04 \\
Standardized $\chi^2_1$ & 25\% & 0.274 & 9.26 & 7.86 & 5.74 & 5.00 & 4.32 \\
Standardized $\chi^2_1$ & 50\% & 0.912 & 10.32 & 7.00 & 5.18 & 4.90 & 4.92 \\
Standardized $\chi^2_1$ & 75\% & 3.314 & 11.90 & 7.66 & 4.72 & 4.72 & 4.58 \\
\bottomrule
\end{tabular}
\begin{flushleft}
\small
Notes: Each entry reports the empirical rejection rate of the EM test with $K=1$ under the null hypothesis. The data-generating process is \eqref{dgp_independent} with $(\delta, \bs{\beta}, \gamma) = (1,1,1)$, and $\lambda$ is chosen to attain the stated censoring fraction. The row with a $N(0,1)$ covariate and 50\% censoring reproduces Table \ref{table level independent}. The rejection rates for $K=2$ and $K=3$ are identical to those for $K=1$ for $n \geq 500$ and exceed them by at most 1.3 percentage points for $n \leq 100$. The nominal level is 5\%, and the number of Monte Carlo replications is 5{,}000. With a standardized $\chi^2_1$ covariate, 75\% censoring, and $n = 50$, the test statistics cannot be computed because one replication has only two uncensored observations. The rejection rate is computed over the remaining 4{,}999 replications.
\end{flushleft}
\end{table}

\begin{table}[H]
\centering
\caption{Covariate-independent censoring: power of tests for (I) discrete mixture 1}
\label{table power independent discrete 1}
\small
\begin{tabular}{lrrrrrr}
\toprule
$n$ & 50 & 100 & 500 & 1000 & 2000 & 5000 \\
\midrule
\multicolumn{7}{l}{Size-unadjusted power} \\
\midrule
EM ($K=1$) & 14.70 & 15.42 & 33.96 & 52.46 & 78.00 & 98.34 \\
LRT ($A = [7/9, 2]$) & 0.00 & 0.04 & 9.14 & 34.00 & 69.24 & 97.94 \\
LRT ($A = [2/3, 3]$) & 0.06 & 1.26 & 20.32 & 41.80 & 70.50 & 97.60 \\
LRT ($A = [5/9, 4]$) & 0.64 & 3.72 & 23.00 & 41.84 & 68.34 & 97.28 \\
IM & 14.18 & 11.62 & 17.36 & 31.40 & 58.40 & 94.86 \\
LM$_2$ & 15.56 & 12.46 & 17.50 & 31.64 & 58.46 & 94.86 \\
LM$_3$ & 32.58 & 26.84 & 24.56 & 34.74 & 56.92 & 93.68 \\
\midrule
\multicolumn{7}{l}{Size-adjusted power} \\
\midrule
EM ($K=1$) & 6.24 & 10.74 & 33.22 & 50.42 & 77.94 & 98.52 \\
LRT ($A = [7/9, 2]$) & 10.84 & 14.62 & 35.64 & 51.06 & 77.66 & 98.56 \\
LRT ($A = [2/3, 3]$) & 9.50 & 12.56 & 31.74 & 47.44 & 74.94 & 98.02 \\
LRT ($A = [5/9, 4]$) & 9.14 & 11.30 & 30.20 & 44.72 & 70.52 & 97.42 \\
IM & 4.14 & 3.52 & 10.66 & 22.02 & 53.74 & 93.92 \\
LM$_2$ & 4.14 & 3.52 & 10.66 & 22.02 & 53.74 & 93.92 \\
LM$_3$ & 5.32 & 4.12 & 3.28 & 5.56 & 16.98 & 80.40 \\
\bottomrule
\end{tabular}
\begin{flushleft}
\small
Notes: The nominal level is 5\%, and the number of Monte Carlo replications is 5{,}000. The panel ``Size-adjusted power'' uses the 95th percentile of each test statistic's empirical null distribution in Table \ref{table level independent}. The panel ``Size-unadjusted power'' uses asymptotic critical values, bootstrapped for the LRT. Results for the EM test are reported for $K=1$ only, as the power for $K=2$ and $K=3$ is very similar.
\end{flushleft}
\end{table}

\begin{table}[h]
\centering
\caption{Covariate-independent censoring: power of tests for (II) discrete mixture 2}
\label{table power independent discrete 2}
\small
\begin{tabular}{lrrrrrr}
\toprule
$n$ & 50 & 100 & 500 & 1000 & 2000 & 5000 \\
\midrule
\multicolumn{7}{l}{Size-unadjusted power} \\
\midrule
EM ($K=1$) & 12.02 & 8.56 & 6.74 & 7.26 & 7.74 & 8.44 \\
LRT ($A = [7/9, 2]$) & 0.00 & 0.00 & 0.36 & 1.88 & 5.04 & 8.24 \\
LRT ($A = [2/3, 3]$) & 0.00 & 0.16 & 2.80 & 5.52 & 7.22 & 10.28 \\
LRT ($A = [5/9, 4]$) & 0.06 & 0.82 & 4.38 & 6.50 & 7.94 & 10.68 \\
IM & 15.10 & 11.36 & 6.82 & 6.64 & 5.78 & 5.26 \\
LM$_2$ & 16.80 & 12.02 & 7.02 & 6.74 & 5.80 & 5.26 \\
LM$_3$ & 31.08 & 26.56 & 20.10 & 19.44 & 17.22 & 16.80 \\
\midrule
\multicolumn{7}{l}{Size-adjusted power} \\
\midrule
EM ($K=1$) & 5.72 & 5.64 & 6.48 & 6.46 & 7.68 & 9.34 \\
LRT ($A = [7/9, 2]$) & 3.94 & 3.76 & 5.50 & 5.88 & 7.96 & 10.24 \\
LRT ($A = [2/3, 3]$) & 3.66 & 3.74 & 6.06 & 7.00 & 8.80 & 10.98 \\
LRT ($A = [5/9, 4]$) & 4.30 & 4.04 & 6.68 & 7.04 & 8.36 & 11.12 \\
IM & 4.36 & 3.78 & 4.10 & 3.72 & 4.40 & 4.48 \\
LM$_2$ & 4.36 & 3.78 & 4.10 & 3.72 & 4.40 & 4.48 \\
LM$_3$ & 3.54 & 3.06 & 3.94 & 3.94 & 5.08 & 7.18 \\
\bottomrule
\end{tabular}
\begin{flushleft}
\small
Notes: The nominal level is 5\%, and the number of Monte Carlo replications is 5{,}000. The panel ``Size-adjusted power'' uses the 95th percentile of each test statistic's empirical null distribution in Table \ref{table level independent}. The panel ``Size-unadjusted power'' uses asymptotic critical values, bootstrapped for the LRT.
\end{flushleft}
\end{table}

\begin{table}[h]
\centering
\caption{Covariate-independent censoring: power of tests for (III) gamma mixture}
\label{table power independent gamma}
\small
\begin{tabular}{lrrrrrr}
\toprule
$n$ & 50 & 100 & 500 & 1000 & 2000 & 5000 \\
\midrule
\multicolumn{7}{l}{Size-unadjusted power} \\
\midrule
EM ($K=1$) & 14.42 & 15.26 & 28.34 & 45.36 & 69.60 & 96.30 \\
LRT ($A = [7/9, 2]$) & 0.00 & 0.00 & 4.32 & 20.60 & 53.20 & 93.38 \\
LRT ($A = [2/3, 3]$) & 0.00 & 0.40 & 13.96 & 31.92 & 58.36 & 93.52 \\
LRT ($A = [5/9, 4]$) & 0.36 & 2.32 & 17.96 & 34.18 & 58.52 & 92.98 \\
IM & 14.94 & 12.12 & 14.76 & 25.96 & 48.46 & 89.56 \\
LM$_2$ & 16.82 & 12.82 & 14.96 & 26.06 & 48.60 & 89.56 \\
LM$_3$ & 31.40 & 25.80 & 19.42 & 23.98 & 40.50 & 83.24 \\
\midrule
\multicolumn{7}{l}{Size-adjusted power} \\
\midrule
EM ($K=1$) & 6.56 & 10.34 & 27.76 & 42.70 & 69.54 & 96.70 \\
LRT ($A = [7/9, 2]$) & 7.48 & 10.76 & 26.34 & 39.30 & 64.62 & 95.38 \\
LRT ($A = [2/3, 3]$) & 6.50 & 9.66 & 24.22 & 37.40 & 64.16 & 94.66 \\
LRT ($A = [5/9, 4]$) & 6.54 & 9.20 & 24.46 & 36.44 & 61.24 & 93.52 \\
IM & 4.98 & 4.28 & 8.46 & 17.10 & 42.74 & 88.00 \\
LM$_2$ & 4.98 & 4.28 & 8.46 & 17.10 & 42.74 & 88.00 \\
LM$_3$ & 4.54 & 3.38 & 1.90 & 3.00 & 9.02 & 60.38 \\
\bottomrule
\end{tabular}
\begin{flushleft}
\small
Notes: The nominal level is 5\%, and the number of Monte Carlo replications is 5{,}000. The panel ``Size-adjusted power'' uses the 95th percentile of each test statistic's empirical null distribution in Table \ref{table level independent}. The panel ``Size-unadjusted power'' uses asymptotic critical values, bootstrapped for the LRT.
\end{flushleft}
\end{table}

\begin{table}[h]
\centering
\caption{Covariate-independent censoring: power of tests for (IV) log-normal mixture}
\label{table power independent lognormal}
\small
\begin{tabular}{lrrrrrr}
\toprule
$n$ & 50 & 100 & 500 & 1000 & 2000 & 5000 \\
\midrule
\multicolumn{7}{l}{Size-unadjusted power} \\
\midrule
EM ($K=1$) & 13.96 & 13.00 & 22.72 & 33.72 & 52.00 & 86.06 \\
LRT ($A = [7/9, 2]$) & 0.00 & 0.02 & 3.44 & 13.88 & 38.06 & 81.38 \\
LRT ($A = [2/3, 3]$) & 0.04 & 0.42 & 11.68 & 23.30 & 42.40 & 80.92 \\
LRT ($A = [5/9, 4]$) & 0.40 & 2.04 & 14.94 & 24.00 & 41.60 & 79.18 \\
IM & 14.16 & 11.28 & 11.72 & 16.92 & 31.70 & 71.94 \\
LM$_2$ & 16.06 & 12.10 & 11.88 & 17.04 & 31.86 & 71.96 \\
LM$_3$ & 30.42 & 24.68 & 18.52 & 19.06 & 28.04 & 63.10 \\
\midrule
\multicolumn{7}{l}{Size-adjusted power} \\
\midrule
EM ($K=1$) & 6.66 & 8.62 & 22.18 & 31.72 & 51.92 & 87.20 \\
LRT ($A = [7/9, 2]$) & 7.52 & 9.32 & 21.78 & 28.76 & 48.78 & 85.58 \\
LRT ($A = [2/3, 3]$) & 6.40 & 8.28 & 20.58 & 27.20 & 47.44 & 83.36 \\
LRT ($A = [5/9, 4]$) & 6.16 & 7.90 & 20.48 & 26.12 & 44.00 & 80.56 \\
IM & 4.58 & 4.32 & 6.56 & 10.66 & 26.72 & 68.62 \\
LM$_2$ & 4.58 & 4.32 & 6.56 & 10.66 & 26.72 & 68.62 \\
LM$_3$ & 4.48 & 3.66 & 2.28 & 2.10 & 4.56 & 36.14 \\
\bottomrule
\end{tabular}
\begin{flushleft}
\small
Notes: The nominal level is 5\%, and the number of Monte Carlo replications is 5{,}000. The panel ``Size-adjusted power'' uses the 95th percentile of each test statistic's empirical null distribution in Table \ref{table level independent}. The panel ``Size-unadjusted power'' uses asymptotic critical values, bootstrapped for the LRT.
\end{flushleft}
\end{table}

\begin{table}[h]
\centering
\caption{Covariate-independent censoring: power of tests for (V) uniform mixture 1}
\label{table power independent uniform 1}
\small
\begin{tabular}{lrrrrrr}
\toprule
$n$ & 50 & 100 & 500 & 1000 & 2000 & 5000 \\
\midrule
\multicolumn{7}{l}{Size-unadjusted power} \\
\midrule
EM ($K=1$) & 17.96 & 18.82 & 49.48 & 74.12 & 94.98 & 99.98 \\
LRT ($A = [7/9, 2]$) & 0.00 & 0.00 & 16.08 & 51.32 & 88.70 & 99.90 \\
LRT ($A = [2/3, 3]$) & 0.10 & 1.30 & 30.02 & 60.74 & 90.32 & 99.90 \\
LRT ($A = [5/9, 4]$) & 0.96 & 4.14 & 35.04 & 62.72 & 89.92 & 99.90 \\
IM & 17.10 & 13.54 & 31.26 & 54.60 & 85.80 & 99.86 \\
LM$_2$ & 18.60 & 14.34 & 31.56 & 54.84 & 85.84 & 99.86 \\
LM$_3$ & 34.32 & 27.20 & 29.04 & 47.86 & 79.12 & 99.54 \\
\midrule
\multicolumn{7}{l}{Size-adjusted power} \\
\midrule
EM ($K=1$) & 8.14 & 12.78 & 48.74 & 71.90 & 94.98 & 99.98 \\
LRT ($A = [7/9, 2]$) & 13.88 & 18.90 & 50.36 & 71.00 & 93.50 & 99.98 \\
LRT ($A = [2/3, 3]$) & 12.02 & 15.48 & 44.72 & 66.84 & 93.04 & 99.96 \\
LRT ($A = [5/9, 4]$) & 11.08 & 13.64 & 44.20 & 65.34 & 91.32 & 99.90 \\
IM & 5.30 & 4.32 & 20.58 & 43.14 & 82.40 & 99.80 \\
LM$_2$ & 5.30 & 4.32 & 20.58 & 43.14 & 82.40 & 99.80 \\
LM$_3$ & 5.14 & 4.78 & 3.80 & 9.68 & 36.86 & 97.64 \\
\bottomrule
\end{tabular}
\begin{flushleft}
\small
Notes: The nominal level is 5\%, and the number of Monte Carlo replications is 5{,}000. The panel ``Size-adjusted power'' uses the 95th percentile of each test statistic's empirical null distribution in Table \ref{table level independent}. The panel ``Size-unadjusted power'' uses asymptotic critical values, bootstrapped for the LRT.
\end{flushleft}
\end{table}

\begin{table}[h]
\centering
\caption{Covariate-independent censoring: power of tests for (VI) uniform mixture 2}
\label{table power independent uniform 2}
\small
\begin{tabular}{lrrrrrr}
\toprule
$n$ & 50 & 100 & 500 & 1000 & 2000 & 5000 \\
\midrule
\multicolumn{7}{l}{Size-unadjusted power} \\
\midrule
EM ($K=1$) & 11.40 & 7.98 & 7.08 & 8.28 & 9.42 & 12.00 \\
LRT ($A = [7/9, 2]$) & 0.00 & 0.02 & 1.18 & 3.46 & 6.10 & 9.20 \\
LRT ($A = [2/3, 3]$) & 0.12 & 0.30 & 3.48 & 5.80 & 7.08 & 9.68 \\
LRT ($A = [5/9, 4]$) & 0.60 & 1.26 & 4.38 & 5.82 & 7.20 & 9.20 \\
IM & 16.80 & 12.36 & 7.74 & 6.72 & 5.70 & 6.32 \\
LM$_2$ & 18.72 & 13.18 & 7.84 & 6.78 & 5.70 & 6.32 \\
LM$_3$ & 38.30 & 32.00 & 19.46 & 17.66 & 13.02 & 10.20 \\
\midrule
\multicolumn{7}{l}{Size-adjusted power} \\
\midrule
EM ($K=1$) & 5.22 & 5.18 & 6.84 & 7.46 & 9.40 & 13.02 \\
LRT ($A = [7/9, 2]$) & 7.68 & 7.56 & 8.76 & 8.82 & 9.62 & 12.20 \\
LRT ($A = [2/3, 3]$) & 7.18 & 7.06 & 7.22 & 7.34 & 9.48 & 11.30 \\
LRT ($A = [5/9, 4]$) & 7.10 & 6.22 & 6.90 & 7.08 & 8.20 & 10.54 \\
IM & 5.44 & 4.86 & 4.40 & 4.26 & 4.34 & 5.10 \\
LM$_2$ & 5.44 & 4.86 & 4.40 & 4.26 & 4.34 & 5.10 \\
LM$_3$ & 7.10 & 7.28 & 5.24 & 4.58 & 3.34 & 3.64 \\
\bottomrule
\end{tabular}
\begin{flushleft}
\small
Notes: The nominal level is 5\%, and the number of Monte Carlo replications is 5{,}000. The panel ``Size-adjusted power'' uses the 95th percentile of each test statistic's empirical null distribution in Table \ref{table level independent}. The panel ``Size-unadjusted power'' uses asymptotic critical values, bootstrapped for the LRT.
\end{flushleft}
\end{table}

\begin{table}[h]
\centering
\caption{Covariate-dependent censoring: empirical rejection rates of tests under the null}
\label{table level dependent}
\small
\begin{tabular}{lrrrrrr}
\toprule
$n$ & 50 & 100 & 500 & 1000 & 2000 & 5000 \\
\midrule
EM ($K=1$) & 13.38 & 9.42 & 5.48 & 4.64 & 4.78 & 4.58 \\
EM ($K=2$) & 14.58 & 9.52 & 5.48 & 4.66 & 4.78 & 4.58 \\
EM ($K=3$) & 14.94 & 9.66 & 5.48 & 4.66 & 4.78 & 4.58 \\
LRT ($A = [7/9, 2]$) & 0.00 & 0.00 & 0.74 & 2.14 & 3.02 & 3.72 \\
LRT ($A = [2/3, 3]$) & 0.06 & 0.40 & 2.72 & 3.78 & 3.68 & 3.86 \\
LRT ($A = [5/9, 4]$) & 0.58 & 2.04 & 3.74 & 4.28 & 3.94 & 4.16 \\
IM & 24.28 & 17.90 & 10.10 & 8.24 & 6.98 & 5.72 \\
LM$_2$ & 26.02 & 18.80 & 10.28 & 8.34 & 7.02 & 5.74 \\
LM$_3$ & 61.60 & 47.92 & 29.06 & 23.60 & 18.18 & 13.46 \\
\bottomrule
\end{tabular}
\begin{flushleft}
\small
Notes: Each entry reports the empirical rejection rate of the corresponding test under the null hypothesis. The nominal level is 5\%. The weighted bootstrap sample size for the LRT is 500. The number of Monte Carlo replications is 5{,}000.
The 95th percentiles of the empirical distributions of the EM test statistics when $n = 100$ for $K = 1, 2$, and $3$ are 4.082, 4.106, and 4.106, respectively.
The 99th percentiles of the same distributions for $K = 1, 2$, and $3$ are 7.862, 7.902, and 7.907, respectively.
The 95th and 99th percentiles of the asymptotic null distribution $(\max \{ 0, N(0, 1) \})^2$ are 2.706 and 5.412, respectively.
\end{flushleft}
\end{table}

\begin{table}[h]
\centering
\caption{Covariate-dependent censoring: power of tests for (I) discrete mixture 1}
\label{table power dependent discrete 1}
\small
\begin{tabular}{lrrrrrr}
\toprule
$n$ & 50 & 100 & 500 & 1000 & 2000 & 5000 \\
\midrule
\multicolumn{7}{l}{Size-unadjusted power} \\
\midrule
EM ($K=1$) & 19.60 & 19.42 & 43.38 & 63.94 & 88.48 & 99.62 \\
LRT ($A = [7/9, 2]$) & 0.00 & 0.04 & 16.60 & 47.36 & 83.56 & 99.68 \\
LRT ($A = [2/3, 3]$) & 0.24 & 1.94 & 29.16 & 53.52 & 82.66 & 99.50 \\
LRT ($A = [5/9, 4]$) & 1.70 & 6.02 & 30.50 & 51.90 & 80.42 & 99.38 \\
IM & 19.92 & 15.26 & 21.36 & 39.42 & 69.48 & 98.08 \\
LM$_2$ & 21.76 & 15.96 & 21.66 & 39.50 & 69.58 & 98.08 \\
LM$_3$ & 57.00 & 43.74 & 36.04 & 47.42 & 74.32 & 98.42 \\
\midrule
\multicolumn{7}{l}{Size-adjusted power} \\
\midrule
EM ($K=1$) & 7.56 & 10.54 & 41.90 & 64.80 & 89.04 & 99.74 \\
LRT ($A = [7/9, 2]$) & 10.90 & 15.44 & 42.86 & 64.46 & 89.38 & 99.78 \\
LRT ($A = [2/3, 3]$) & 10.54 & 13.56 & 38.34 & 59.58 & 86.22 & 99.68 \\
LRT ($A = [5/9, 4]$) & 10.08 & 12.46 & 36.40 & 56.50 & 84.24 & 99.58 \\
IM & 3.28 & 2.32 & 9.34 & 25.66 & 63.12 & 97.68 \\
LM$_2$ & 3.28 & 2.32 & 9.34 & 25.66 & 63.12 & 97.68 \\
LM$_3$ & 3.74 & 2.38 & 1.98 & 2.86 & 23.76 & 90.00 \\
\bottomrule
\end{tabular}
\begin{flushleft}
\small
Notes: The nominal level is 5\%, and the number of Monte Carlo replications is 5{,}000. The panel ``Size-adjusted power'' uses the 95th percentile of each test statistic's empirical null distribution in Table \ref{table level dependent}. The panel ``Size-unadjusted power'' uses asymptotic critical values, bootstrapped for the LRT. Results for the EM test are reported for $K=1$ only, as the power for $K=2$ and $K=3$ is very similar.
\end{flushleft}
\end{table}

\begin{table}[h]
\centering
\caption{Covariate-dependent censoring: power of tests for (II) discrete mixture 2}
\label{table power dependent discrete 2}
\small
\begin{tabular}{lrrrrrr}
\toprule
$n$ & 50 & 100 & 500 & 1000 & 2000 & 5000 \\
\midrule
\multicolumn{7}{l}{Size-unadjusted power} \\
\midrule
EM ($K=1$) & 14.96 & 10.44 & 7.04 & 6.46 & 6.44 & 7.06 \\
LRT ($A = [7/9, 2]$) & 0.00 & 0.00 & 0.76 & 2.64 & 4.76 & 7.56 \\
LRT ($A = [2/3, 3]$) & 0.02 & 0.40 & 3.88 & 5.72 & 6.48 & 9.32 \\
LRT ($A = [5/9, 4]$) & 0.40 & 1.68 & 5.44 & 6.64 & 7.26 & 10.14 \\
IM & 23.52 & 17.12 & 9.92 & 7.22 & 6.44 & 5.18 \\
LM$_2$ & 25.58 & 18.08 & 10.08 & 7.28 & 6.46 & 5.18 \\
LM$_3$ & 60.20 & 47.74 & 31.36 & 26.06 & 21.72 & 18.00 \\
\midrule
\multicolumn{7}{l}{Size-adjusted power} \\
\midrule
EM ($K=1$) & 5.72 & 5.50 & 6.66 & 6.74 & 6.76 & 8.16 \\
LRT ($A = [7/9, 2]$) & 4.30 & 4.12 & 5.80 & 6.38 & 7.52 & 9.52 \\
LRT ($A = [2/3, 3]$) & 3.96 & 3.68 & 6.64 & 7.20 & 8.38 & 11.50 \\
LRT ($A = [5/9, 4]$) & 4.46 & 4.26 & 7.24 & 7.68 & 8.92 & 11.78 \\
IM & 4.58 & 4.42 & 4.66 & 3.90 & 4.90 & 4.12 \\
LM$_2$ & 4.58 & 4.42 & 4.66 & 3.90 & 4.90 & 4.12 \\
LM$_3$ & 3.26 & 3.66 & 5.66 & 5.08 & 6.40 & 7.52 \\
\bottomrule
\end{tabular}
\begin{flushleft}
\small
Notes: The nominal level is 5\%, and the number of Monte Carlo replications is 5{,}000. The panel ``Size-adjusted power'' uses the 95th percentile of each test statistic's empirical null distribution in Table \ref{table level dependent}. The panel ``Size-unadjusted power'' uses asymptotic critical values, bootstrapped for the LRT.
\end{flushleft}
\end{table}

\begin{table}[h]
\centering
\caption{Covariate-dependent censoring: power of tests for (III) gamma mixture}
\label{table power dependent gamma}
\small
\begin{tabular}{lrrrrrr}
\toprule
$n$ & 50 & 100 & 500 & 1000 & 2000 & 5000 \\
\midrule
\multicolumn{7}{l}{Size-unadjusted power} \\
\midrule
EM ($K=1$) & 18.96 & 18.70 & 44.56 & 67.60 & 91.00 & 99.82 \\
LRT ($A = [7/9, 2]$) & 0.04 & 0.06 & 13.30 & 44.92 & 83.72 & 99.62 \\
LRT ($A = [2/3, 3]$) & 0.18 & 1.24 & 25.80 & 54.16 & 85.82 & 99.58 \\
LRT ($A = [5/9, 4]$) & 1.20 & 4.38 & 31.22 & 56.32 & 85.56 & 99.58 \\
IM & 21.50 & 16.78 & 26.70 & 47.22 & 79.48 & 99.36 \\
LM$_2$ & 23.14 & 17.52 & 27.08 & 47.36 & 79.56 & 99.36 \\
LM$_3$ & 56.90 & 43.58 & 30.18 & 42.12 & 71.96 & 98.56 \\
\midrule
\multicolumn{7}{l}{Size-adjusted power} \\
\midrule
EM ($K=1$) & 7.20 & 10.14 & 43.22 & 68.54 & 91.48 & 99.84 \\
LRT ($A = [7/9, 2]$) & 9.46 & 13.18 & 41.42 & 64.12 & 90.38 & 99.74 \\
LRT ($A = [2/3, 3]$) & 8.32 & 11.04 & 37.58 & 61.48 & 89.24 & 99.70 \\
LRT ($A = [5/9, 4]$) & 7.54 & 10.76 & 38.06 & 61.24 & 88.40 & 99.66 \\
IM & 4.18 & 3.42 & 12.16 & 32.30 & 72.92 & 99.14 \\
LM$_2$ & 4.18 & 3.42 & 12.16 & 32.30 & 72.92 & 99.14 \\
LM$_3$ & 2.78 & 2.14 & 1.08 & 1.54 & 21.06 & 90.70 \\
\bottomrule
\end{tabular}
\begin{flushleft}
\small
Notes: The nominal level is 5\%, and the number of Monte Carlo replications is 5{,}000. The panel ``Size-adjusted power'' uses the 95th percentile of each test statistic's empirical null distribution in Table \ref{table level dependent}. The panel ``Size-unadjusted power'' uses asymptotic critical values, bootstrapped for the LRT.
\end{flushleft}
\end{table}

\begin{table}[h]
\centering
\caption{Covariate-dependent censoring: power of tests for (IV) log-normal mixture}
\label{table power dependent lognormal}
\small
\begin{tabular}{lrrrrrr}
\toprule
$n$ & 50 & 100 & 500 & 1000 & 2000 & 5000 \\
\midrule
\multicolumn{7}{l}{Size-unadjusted power} \\
\midrule
EM ($K=1$) & 16.92 & 16.16 & 31.50 & 48.68 & 74.88 & 97.40 \\
LRT ($A = [7/9, 2]$) & 0.00 & 0.00 & 7.64 & 27.34 & 62.32 & 95.84 \\
LRT ($A = [2/3, 3]$) & 0.08 & 0.82 & 17.12 & 36.12 & 64.80 & 95.66 \\
LRT ($A = [5/9, 4]$) & 0.78 & 3.44 & 20.30 & 37.36 & 63.94 & 94.72 \\
IM & 20.30 & 16.62 & 16.02 & 28.10 & 54.46 & 91.60 \\
LM$_2$ & 22.24 & 17.58 & 16.20 & 28.22 & 54.48 & 91.62 \\
LM$_3$ & 56.90 & 43.20 & 25.06 & 29.00 & 48.32 & 87.12 \\
\midrule
\multicolumn{7}{l}{Size-adjusted power} \\
\midrule
EM ($K=1$) & 6.18 & 8.44 & 30.28 & 49.56 & 75.86 & 97.78 \\
LRT ($A = [7/9, 2]$) & 7.84 & 10.82 & 28.96 & 46.04 & 73.16 & 97.34 \\
LRT ($A = [2/3, 3]$) & 6.52 & 9.06 & 25.90 & 42.78 & 70.66 & 96.86 \\
LRT ($A = [5/9, 4]$) & 6.30 & 8.60 & 25.74 & 41.62 & 68.96 & 95.90 \\
IM & 3.60 & 3.46 & 6.56 & 16.22 & 46.98 & 90.16 \\
LM$_2$ & 3.60 & 3.46 & 6.56 & 16.22 & 46.98 & 90.16 \\
LM$_3$ & 2.76 & 2.12 & 1.22 & 1.16 & 7.80 & 59.70 \\
\bottomrule
\end{tabular}
\begin{flushleft}
\small
Notes: The nominal level is 5\%, and the number of Monte Carlo replications is 5{,}000. The panel ``Size-adjusted power'' uses the 95th percentile of each test statistic's empirical null distribution in Table \ref{table level dependent}. The panel ``Size-unadjusted power'' uses asymptotic critical values, bootstrapped for the LRT.
\end{flushleft}
\end{table}

\begin{table}[h]
\centering
\caption{Covariate-dependent censoring: power of tests for (V) uniform mixture 1}
\label{table power dependent uniform 1}
\small
\begin{tabular}{lrrrrrr}
\toprule
$n$ & 50 & 100 & 500 & 1000 & 2000 & 5000 \\
\midrule
\multicolumn{7}{l}{Size-unadjusted power} \\
\midrule
EM ($K=1$) & 20.78 & 24.26 & 67.42 & 90.62 & 99.30 & 100.00 \\
LRT ($A = [7/9, 2]$) & 0.00 & 0.30 & 35.06 & 77.48 & 98.54 & 100.00 \\
LRT ($A = [2/3, 3]$) & 0.46 & 2.82 & 49.26 & 82.66 & 98.68 & 100.00 \\
LRT ($A = [5/9, 4]$) & 1.86 & 7.46 & 54.18 & 82.92 & 98.66 & 100.00 \\
IM & 21.48 & 19.88 & 49.04 & 78.40 & 97.56 & 100.00 \\
LM$_2$ & 23.56 & 20.72 & 49.32 & 78.58 & 97.62 & 100.00 \\
LM$_3$ & 54.98 & 41.26 & 46.70 & 72.42 & 96.06 & 100.00 \\
\midrule
\multicolumn{7}{l}{Size-adjusted power} \\
\midrule
EM ($K=1$) & 7.32 & 12.98 & 66.28 & 90.92 & 99.34 & 100.00 \\
LRT ($A = [7/9, 2]$) & 14.50 & 21.56 & 66.92 & 89.24 & 99.34 & 100.00 \\
LRT ($A = [2/3, 3]$) & 12.66 & 18.18 & 62.16 & 86.84 & 99.08 & 100.00 \\
LRT ($A = [5/9, 4]$) & 11.08 & 16.84 & 61.12 & 85.94 & 98.92 & 100.00 \\
IM & 4.12 & 3.84 & 29.16 & 65.84 & 96.44 & 100.00 \\
LM$_2$ & 4.12 & 3.84 & 29.16 & 65.84 & 96.44 & 100.00 \\
LM$_3$ & 3.80 & 1.66 & 1.16 & 8.76 & 63.64 & 100.00 \\
\bottomrule
\end{tabular}
\begin{flushleft}
\small
Notes: The nominal level is 5\%, and the number of Monte Carlo replications is 5{,}000. The panel ``Size-adjusted power'' uses the 95th percentile of each test statistic's empirical null distribution in Table \ref{table level dependent}. The panel ``Size-unadjusted power'' uses asymptotic critical values, bootstrapped for the LRT.
\end{flushleft}
\end{table}

\begin{table}[h]
\centering
\caption{Covariate-dependent censoring: power of tests for (VI) uniform mixture 2}
\label{table power dependent uniform 2}
\small
\begin{tabular}{lrrrrrr}
\toprule
$n$ & 50 & 100 & 500 & 1000 & 2000 & 5000 \\
\midrule
\multicolumn{7}{l}{Size-unadjusted power} \\
\midrule
EM ($K=1$) & 13.92 & 10.10 & 7.96 & 8.42 & 10.36 & 15.12 \\
LRT ($A = [7/9, 2]$) & 0.00 & 0.02 & 1.72 & 4.20 & 6.58 & 11.36 \\
LRT ($A = [2/3, 3]$) & 0.16 & 0.96 & 3.94 & 6.00 & 7.00 & 10.96 \\
LRT ($A = [5/9, 4]$) & 0.98 & 2.60 & 4.70 & 6.36 & 7.30 & 10.36 \\
IM & 23.20 & 17.04 & 8.14 & 6.78 & 6.04 & 6.98 \\
LM$_2$ & 25.04 & 18.04 & 8.22 & 6.84 & 6.10 & 7.02 \\
LM$_3$ & 61.36 & 47.36 & 25.02 & 19.90 & 13.62 & 10.64 \\
\midrule
\multicolumn{7}{l}{Size-adjusted power} \\
\midrule
EM ($K=1$) & 4.80 & 5.02 & 7.34 & 8.68 & 10.84 & 16.60 \\
LRT ($A = [7/9, 2]$) & 6.12 & 7.12 & 8.48 & 9.20 & 10.36 & 14.38 \\
LRT ($A = [2/3, 3]$) & 6.44 & 6.32 & 6.90 & 7.86 & 9.58 & 14.02 \\
LRT ($A = [5/9, 4]$) & 6.00 & 5.88 & 6.86 & 7.70 & 9.36 & 12.76 \\
IM & 5.30 & 4.74 & 3.58 & 3.60 & 4.18 & 5.74 \\
LM$_2$ & 5.30 & 4.74 & 3.58 & 3.60 & 4.18 & 5.74 \\
LM$_3$ & 6.80 & 6.48 & 4.04 & 3.36 & 2.90 & 2.94 \\
\bottomrule
\end{tabular}
\begin{flushleft}
\small
Notes: The nominal level is 5\%, and the number of Monte Carlo replications is 5{,}000. The panel ``Size-adjusted power'' uses the 95th percentile of each test statistic's empirical null distribution in Table \ref{table level dependent}. The panel ``Size-unadjusted power'' uses asymptotic critical values, bootstrapped for the LRT.
\end{flushleft}
\end{table}

\begin{table}[h]
\centering
\caption{The $p$-values for the real-world data analysis}
\label{real-world data analysis}
\small
\begin{tabular}{lccccc}
\toprule
Specification & (I) & (II) & (III) & (IV) & (V) \\
\midrule
\multicolumn{6}{l}{Asymptotic $p$-values} \\
\midrule
EM ($K=1$) & 0.000 & 0.003 & 0.000 & 0.000 & 0.002 \\
EM ($K=2$) & 0.000 & 0.003 & 0.000 & 0.000 & 0.002 \\
EM ($K=3$) & 0.000 & 0.003 & 0.000 & 0.000 & 0.002 \\
LRT ($A = [7/9, 2]$) & 0.142 & 0.326 & 0.078 & 0.106 & 0.110 \\
LRT ($A = [2/3, 3]$) & 0.070 & 0.258 & 0.020 & 0.030 & 0.028 \\
LRT ($A = [5/9, 4]$) & 0.044 & 0.884 & 0.002 & 0.012 & 0.006 \\
IM & 0.023 & 0.057 & 0.002 & 0.005 & 0.004 \\
LM$_2$ & 0.020 & 0.053 & 0.001 & 0.004 & 0.003 \\
LM$_3$ & 0.000 & 0.001 & 0.000 & 0.000 & 0.000 \\
\midrule
\multicolumn{6}{l}{Simulation-based $p$-values ($n = 100$)} \\
\midrule
EM ($K=1$) & 0.002 & 0.013 & 0.001 & 0.002 & 0.009 \\
EM ($K=2$) & 0.002 & 0.013 & 0.001 & 0.002 & 0.009 \\
EM ($K=3$) & 0.002 & 0.013 & 0.001 & 0.002 & 0.009 \\
LRT ($A = [7/9, 2]$) & --- & --- & --- & --- & --- \\
LRT ($A = [2/3, 3]$) & --- & --- & --- & --- & --- \\
LRT ($A = [5/9, 4]$) & --- & --- & --- & --- & --- \\
IM & 0.126 & 0.193 & 0.036 & 0.057 & 0.052 \\
LM$_2$ & 0.126 & 0.193 & 0.036 & 0.057 & 0.052 \\
LM$_3$ & 0.092 & 0.226 & 0.048 & 0.142 & 0.194 \\
\bottomrule
\end{tabular}
\begin{flushleft}
\small
Notes: Each entry reports the $p$-value of the corresponding test statistic for the corresponding specification in \eqref{specification}.
The panel ``Asymptotic $p$-values'' uses the asymptotic null distribution of each test statistic.
The panel ``Simulation-based $p$-values'' instead uses the empirical distribution of each test statistic under the null setting with $n = 100$ reported in Table \ref{table level dependent}; it is a sensitivity analysis rather than a finite-sample correction for these data (see Section \ref{sec:realdata}).
Simulation-based $p$-values are not reported for the LRT, whose asymptotic null distribution is specific to the data-generating process.
\end{flushleft}
\end{table}

\end{document}